\algrenewcommand\algorithmicindent{1.0em}%
\newcommand{\argmax}{\mathop{\rm arg~max}\limits}
\newtheorem{theorem}{Theorem}[section]
\newtheorem{lemma}[theorem]{Lemma}
\newtheorem{corollary}[theorem]{Corollary}
\title{Mutation-Driven Follow the Regularized Leader for Last-Iterate Convergence in Zero-Sum Games}
\author[1]{\href{mailto:<abe$\_$kenshi@cyberagent.co.jp>?Subject=Your UAI 2022 paper}{Kenshi Abe}{}}
\author[2]{Mitsuki Sakamoto}
\author[2]{Atsushi Iwasaki}
\affil[1]{%
    CyberAgent, Inc.
}
\affil[2]{%
    University of Electro-Communications
}
\begin{document}
\maketitle

\begin{abstract}
In this study, we consider a variant of the Follow the Regularized Leader (FTRL) dynamics in two-player zero-sum games.
FTRL is guaranteed to converge to a Nash equilibrium when time-averaging the strategies, while a lot of variants suffer from the issue of limit cycling behavior, i.e., lack the last-iterate convergence guarantee.
To this end, we propose mutant FTRL (M-FTRL), an algorithm that introduces mutation for the perturbation of action probabilities.
We then investigate the continuous-time dynamics of M-FTRL and provide the strong convergence guarantees toward stationary points that approximate Nash equilibria under full-information feedback.
Furthermore, our simulation demonstrates that M-FTRL can enjoy faster convergence rates than FTRL and optimistic FTRL under full-information feedback and surprisingly exhibits clear convergence under bandit feedback.
\end{abstract}

\begin{figure*}[t!]
    \centering
    \begin{minipage}[t]{0.24\textwidth}
        \centering
        \includegraphics[width=\linewidth]{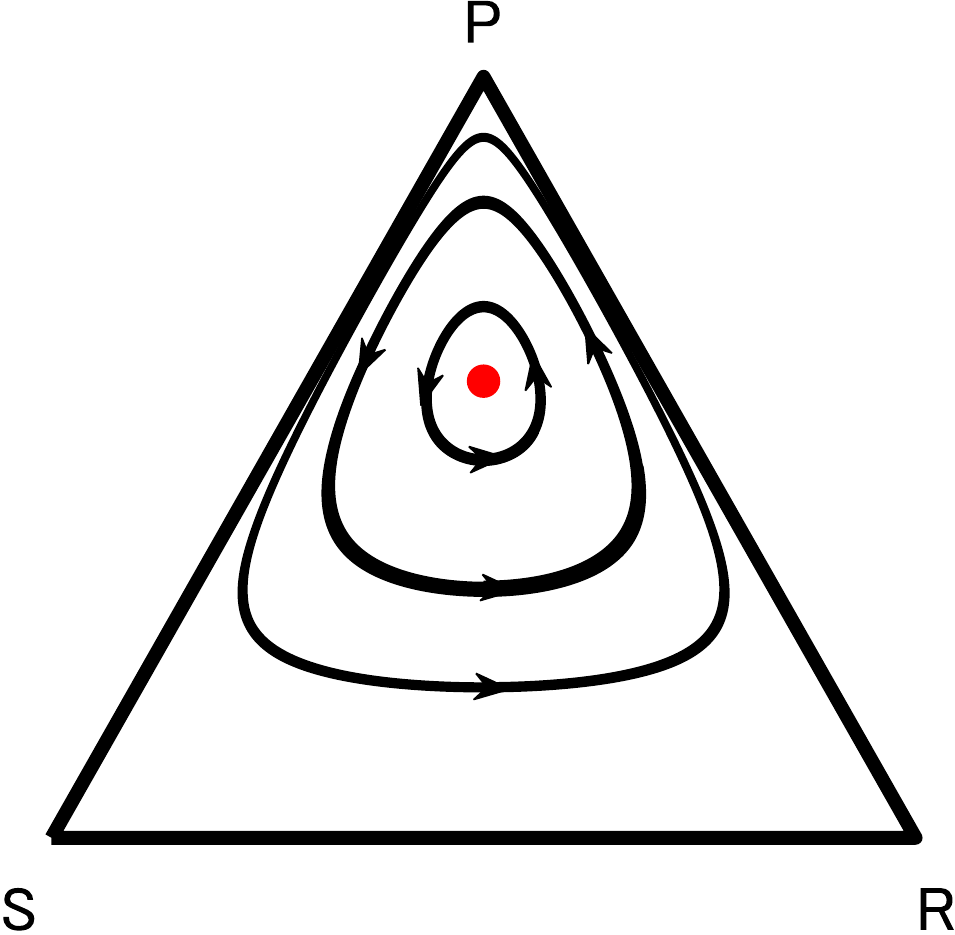}
        \subcaption{RD}\label{fig:FTRL-trajectory}
    \end{minipage}
    \begin{minipage}[t]{0.24\textwidth}
        \centering
        \includegraphics[width=\linewidth]{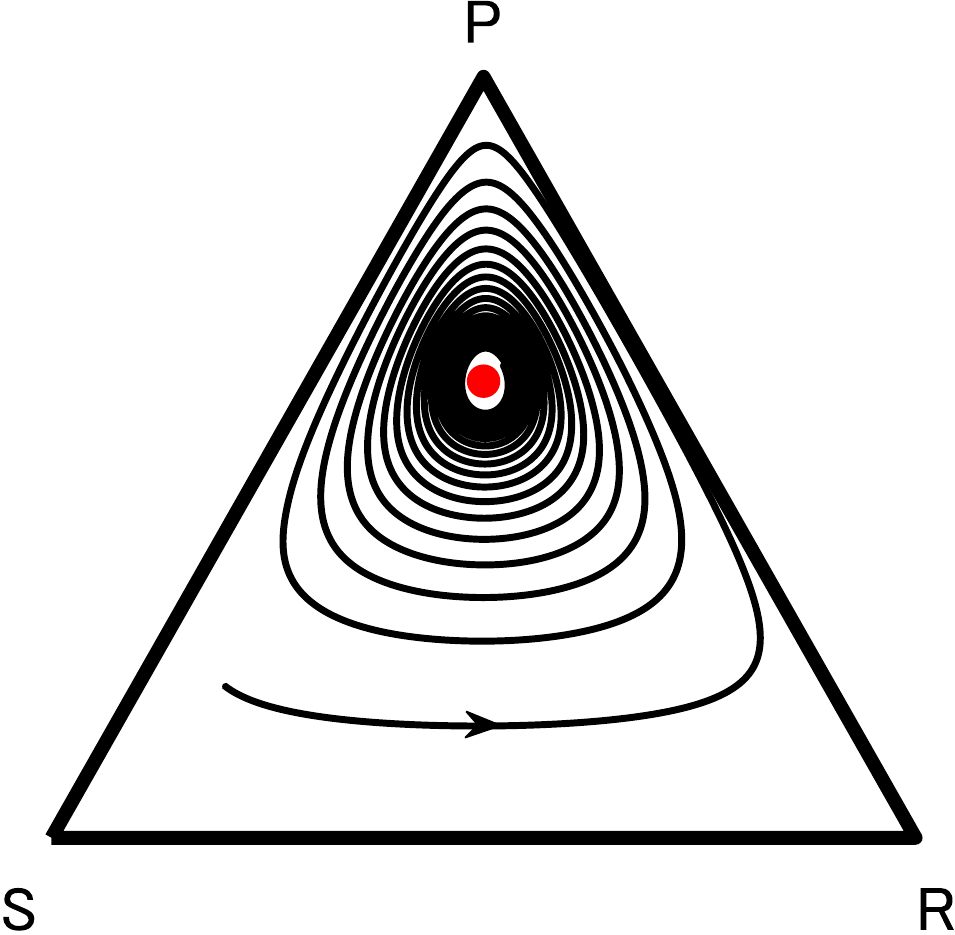}
        \subcaption{RMD ($\mu=0.01$)}\label{fig:M-FTRL-trajectory-0.01}
    \end{minipage}
    \begin{minipage}[t]{0.24\textwidth}
        \centering
        \includegraphics[width=\linewidth]{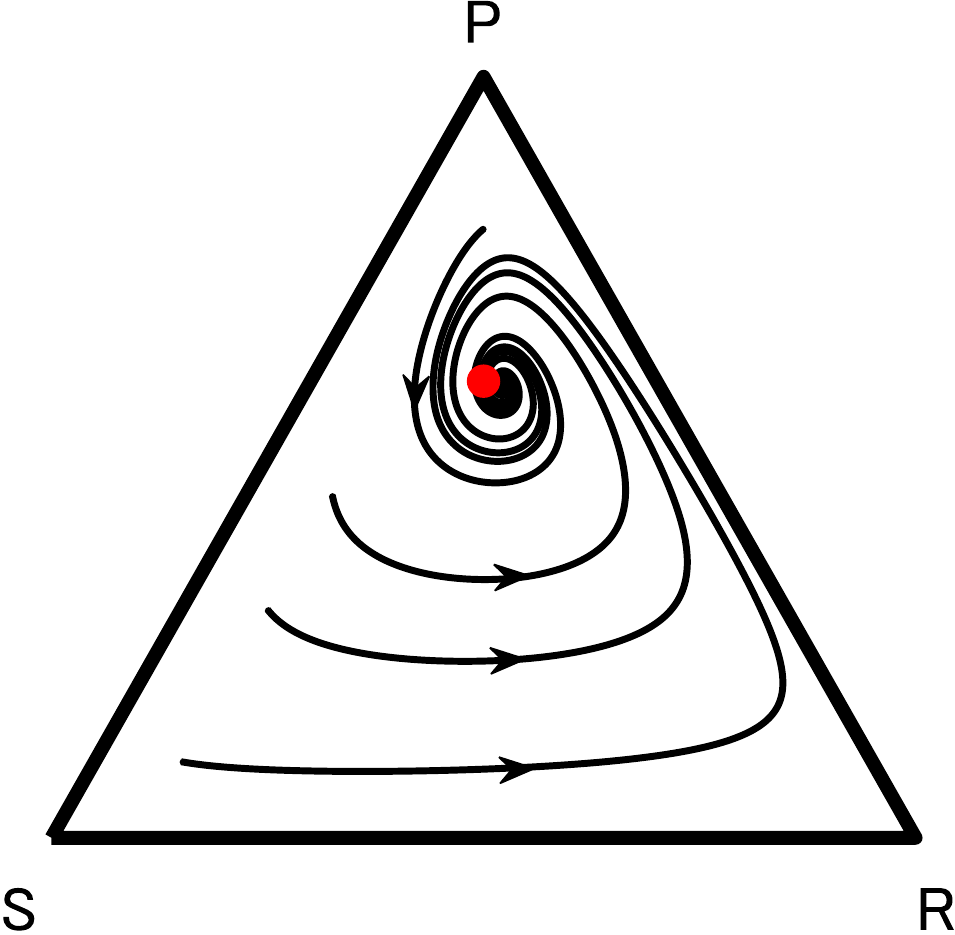}
        \subcaption{RMD ($\mu=0.1$)}\label{fig:M-FTRL-trajectory-0.1}
    \end{minipage}
    \begin{minipage}[t]{0.24\textwidth}
        \centering
        \includegraphics[width=\linewidth]{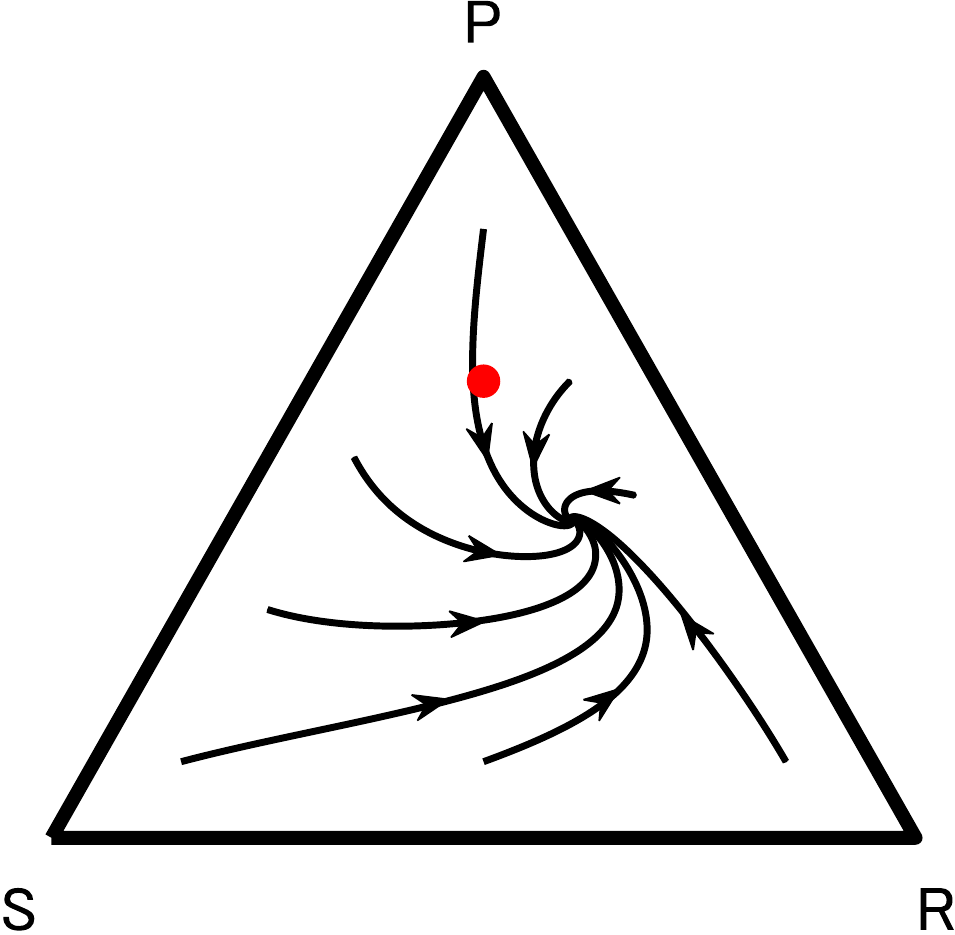}
        \subcaption{RMD ($\mu=1.0$)}\label{fig:M-FTRL-trajectory-1.0}
    \end{minipage}
    \caption{
    Learning dynamics of RD and RMD in biased Rock-Paper-Scissors.
    The red dot represents the Nash equilibrium point of the game.
    }
    \label{fig:learningdynamics}
\end{figure*}

\section{Introduction}
Our study focuses on the problem of learning an equilibrium in two-player zero-sum games.
In order to find an equilibrium in two-player zero-sum games, we need to solve a minimax optimization (or saddle-point optimization) in the form of $\min_{x}\max_{y}f(x,y)$.
Motivated by advances of multi-agent reinforcement learning~\citep{busoniu2008comprehensive} and Generative Adversarial Networks (GANs)~\citep{goodfellow2014generative}, the development of algorithms that efficiently approximate the solution of the minimax optimization is attracting considerable interest \citep{blum:agt:2007,daskalakis2017training}.

There are a lot of studies focusing on developing no-regret learning algorithms where the iterate-average strategy profile converges to a Nash equilibrium of two-player zero-sum games~\citep{banerjee2005efficient,zinkevich2007regret,daskalakis2011near}.
However, well-known no-regret learning algorithms such as Follow the Regularized Leader (FTRL) are shown to cycle and fail to converge without time-averaging~\citep{mertikopoulos2018cycles,bailey2018multiplicative}.
In recent years, several studies have developed and analyzed algorithms whose trajectory of updated strategies directly converges to an equilibrium without forming a cycle, such as optimistic FTRL (O-FTRL)~\citep{daskalakis2017training,daskalakis2018last,mertikopoulos2018optimistic,wei2020linear,lei2021last}.
This convergence property is known as {\it last-iterate convergence}.
However, establishing the explicit convergence rates of optimistic multiplicative weights update, which is tantamount to O-FTRL with entropy regularization, requires that the equilibrium in underlying games must be unique \citep{daskalakis2018last,wei2020linear}.

In this study, as an alternative, we propose mutant FTRL\footnote{An implementation of our method is available at \url{https://github.com/CyberAgentAILab/mutant-ftrl}.} (M-FTRL), an algorithm that introduces mutation for the perturbation of action probabilities.
We first identify the discrete-time version of the M-FTRL dynamics and then modify it to the continuous-time version to provide the theoretical analysis. We prove the followings: 
1) M-FTRL dynamics induced by the entropy regularizer is equivalent to replicator-mutator dynamics (RMD)~\citep{hofbauer:mor:2009,zagorsky:plosone:2013,Bauer:2019};
2) for general regularization functions, the strategy trajectory of M-FTRL converges to a stationary point of the RMD;
3) the trajectory of M-FTRL with the entropy regularizer converges to an approximate Nash equilibrium at an exponentially fast rate.
To the best of our knowledge, we are the first to provide the convergence result for RMD in two-player zero-sum games.

Furthermore, our simulation demonstrates that M-FTRL can enjoy faster convergence rates than FTRL and optimistic FTRL under full-information feedback, i.e., M-FTRL converges to a stationary point, which approximates a Nash equilibrium, faster. 
It also exhibits clear convergence under partial-information or bandit feedback, where each player takes the feedback about the payoffs from his or her chosen actions. 
We empirically observe the last-iterate convergence behavior in the M-FTRL dynamics, as well as under full-information feedback, while neither FTRL nor O-FTRL reveals such behavior.  
This is surprising because it is an open question if a last-iterate convergence guarantee is provided under bandit feedback.

\section{Related Literature}
\paragraph{Average-iterate convergence}
There are a lot of previous studies focusing on developing no-regret learning algorithms that enjoy average-iterate convergence in two-player zero-sum games \citep{cesa2006prediction,zinkevich2007regret,hofbauer:mor:2009,syrgkanis2015fast}.
FTRL is one of the most widely studied no-regret learning algorithm and has been shown to be convergent if the equilibrium is deterministic or strict \citep{mertikopoulos2018cycles,giannou2021convergence}.
If the equilibrium strategy is a mixed strategy with full support, FTRL's trajectory can be recurrent \citep{mertikopoulos2018cycles}.
For extensive-form games, counterfactual regret minimization \citep{zinkevich2007regret} and its variants have been developed as a no-regret learning algorithm \citep{gibson2012generalized,tammelin2014solving,lanctot:nips:2017,schmid2019variance,brown2019solving,davis2020low}.
However, most of these algorithms have not been proven that the last-iterate strategy converges.

\paragraph{Last-iterate convergence}
In recent years, various algorithms using an optimistic online learning framework \citep{rakhlin2013online,rakhlin2013optimization} have been proposed for last-iterate convergence in minimax optimization.
Optimistic gradient descent ascent \citep{daskalakis2017training,mertikopoulos2018optimistic,wei2020linear} and optimistic multiplicative weights update \citep{daskalakis2018last,wei2020linear,lei2021last} are the variants of O-FTRL, and they have been shown to enjoy the last-iterate convergence guarantee in constrained and unconstrained saddle optimization problems. 
Furthermore, \citet{nguyen2021last} have proposed the no-regret learning algorithm, which exhibits the last-iterate convergence in asymmetric repeated games.
In contrast to their optimistic modification of FTRL, which boosts updates for expected utitilities, 
our method is motivated by replicator-mutator dynamics and provides an alternative way to enjoy the last-iterate convergence guarantee.

\paragraph{Replicator-mutator dynamics}
Evolutionary game theory has been strongly related to learning dynamics. 
In fact, it is well-known that cross learning converges to the replicator dynamics (RD) in the continuous-time limit \citep{borgers:jet:1997,Bloembergen2015}, similarly to FTRL. 
On the other hand, RMD \citep{Hofbauer1998} has been overlooked in the context of learning.  
Introducing mutation empirically makes numerical errors in computation small \citep{zagorsky:plosone:2013}. 
However, it makes difficult to analyze the properties. Some notable exceptions report that mutation stabilizes the dynamics~\citep{bomze:geb:1995,Bauer:2019}. Let $\pi^{\mu}$ be an interior stationary point of RMD with mutation rate $\mu$, then $\pi^{\mu}$ is $\varepsilon$-Nash equilibrium of the underlying game for $\varepsilon=\mu$ \citep{Bauer:2019}. Also, evolutionary game dynamics such as RD typically exhibits continua of stationary points and is unlikely to converge to a unique, stable stationary point. Mutation dissolves continua of neutrally stable equilibria into isolated, asymptotically stable ones \citep{bomze:geb:1995}.

\section{Preliminaries}
\subsection{Two-Player Zero-Sum Normal-Form Game}
A two-player normal-form game is defined by utility functions $u_i\in [-u_{\max}, u_{\max}]^{A_1\times A_2}$, where $A_i$ is the finite action space for player $i\in \{1, 2\}$.
In a two-player zero-sum normal-form game, $u_i$ satisfies $u_1(a_1, a_2)+u_2(a_1, a_2)=0$ for all $a_1\in A_1$ and $a_2\in A_2$.
In this game, each player $i$ selects action $a_i\in A_i$ simultaneously.
Then, player $i$ receives utility $u_i(a_1, a_2)$.
Let us denote $\pi_i\in \Delta(A_i)$ as a {\it mixed strategy} for player $i$, where $\Delta(A_i):=\{p \in [0,1]^{|A_i|} ~|~ \sum_{a_i\in A_i}p(a_i)=1\}$ represents the probability simplex on $A_i$.
We define a {\it strategy profile} as $\pi=(\pi_1, \pi_2)$.
For a given strategy profile $\pi$, the expected utility for player $i$ is given by $v_i^{\pi}=\mathbb{E}_{a\sim \pi}\left[u_i(a_1, a_2)\right]$.
We further define the conditional expected utility of taking action $a_i\in A_i$ as $q^{\pi}_i(a_i)=\mathbb{E}_{a_{-i}\sim \pi_{-i}}[u_i(a_i, a_{-i}) | a_i]$, where $-i$ represents the opponent of player $i$.
Finally, we denote the conditional expected utility vector as $q_i^{\pi}=(q_i^{\pi}(a_i))_{a_i\in A_i}$.

\subsection{Nash Equilibrium and Exploitability}
A common solution concept for two-player games is a {\it Nash equilibrium} \citep{nash1951non}, where no player cannot improve his/her expected utility by deviating from his/her specified strategy.
In two-player zero-sum normal-form games, a Nash equilibrium $\pi^{\ast}=(\pi_1^{\ast}, \pi_2^{\ast})$ ensures the following condition: $\forall \pi_1\in \Delta(A_1), \forall \pi_2\in \Delta(A_2),$
\begin{align*}
    v_1^{\pi_1^{\ast}, \pi_2} \geq v_1^{\pi_1^{\ast}, \pi_2^{\ast}} \geq v_1^{\pi_1, \pi_2^{\ast}}.
\end{align*}
An {\it $\epsilon$-Nash equilibrium} $(\pi_1, \pi_2)$ is an approximation of a Nash equilibrium, which satisfies the following inequality:
\begin{align*}
    \max_{\tilde{\pi}_1\in \Delta(A_1)}v_1^{\tilde{\pi}_1, \pi_2} + \max_{\tilde{\pi}_2\in \Delta(A_2)}v_2^{\pi_1, \tilde{\pi}_2} \leq \epsilon.
\end{align*}
Furthermore, we call $\mathrm{exploit}(\pi):=\max_{\tilde{\pi}_1\in \Delta(A_1)}v_1^{\tilde{\pi}_1, \pi_2} + \max_{\tilde{\pi}_2\in \Delta(A_2)}v_2^{\pi_1, \tilde{\pi}_2}$ as {\it exploitability} of a given strategy profile $\pi$.
Exploitability is a metric for measuring how close $\pi$ is to a Nash equilibrium $\pi^{\ast}$ in two-player zero-sum games \citep{johanson2011accelerating,johanson2012finding,lockhart2019computing,timbers2020approximate,abe2020off}.
From the definition, a Nash equilibrium $\pi^{\ast}$ has the lowest exploitability of $0$.

\subsection{Problem Setting}
In this study, we consider the setting where the game is played repeatedly for $T$ iterations.
At each iteration $t\in [T]$, each player $i$ determines the (mixed) strategy $\pi_i^t\in \Delta(A_i)$ based on the past-observed feedback.
Then, each player $i$ observes the new feedback.
In this study, we focus on two feedback cases: {\it full-information feedback} and {\it bandit feedback}.
At the end of the iteration $t$ under full-information feedback, player $i$ observes the conditional expected utility vector $(q_i^{\pi^t}(a_i))_{a_i \in A_i}$ as feedback.
Under bandit feedback, each player $i$ chooses an action $a_i^t$ according to $\pi_i^t$.
Then, each player observes the realized utility $u_i(a_1^t, a_2^t)$.

FTRL is a widely used learning algorithm in the repeated game setting.
For player $i$, FTRL methods are defined with {\it regularization function} $\psi_i: \Delta(A_i)\to \mathbb{R}$, which is strictly convex and continuously differentiable on $\Delta(A_i)$.
In FTRL, each player $i$ determines her strategy $\pi_i^t$ at iteration $t$ as follows:
\begin{align*}
    \pi_i^t &= \argmax_{p\in \Delta(A_i)} \left\{\eta\left\langle y_i^t, p\right\rangle - \psi_i(p)\right\},\\
    y_i^t(a_i) &= \sum_{s=1}^{t-1} q^{\pi^s}_i(a_i),
\end{align*}
where $\eta>0$ is the learning rate.

\subsection{Other Notations}
We denote the interior of the probability simplex $\Delta(A_i)$ by $\Delta^{\circ}(A_i) := \{p \in \Delta(A_i) ~|~ \forall a_i\in A_i, ~p(a_i) > 0\}$.
For a strictly convex and continuously differentiable function $\psi$, the associated {\it Bregman divergence} is defined as $D_{\psi}(x, x')=\psi(x) - \psi(x') - \langle \nabla \psi(x'), x-x'\rangle$.
The {\it Kullback-Leibler divergence}, which is the Bregman divergence with the entropy regularizer $\psi(x)=\sum_i x_i \ln x_i$, is denoted by $\mathrm{KL}(x, x')=\sum_i x_i\ln \frac{x_i}{x_i'}$.
Besides, we define the sum of Bregman divergences and sum of Kullback-Leibler divergences as $D_{\psi}(\pi, \pi')=\sum_{i=1}^2D_{\psi_i}(\pi_i, \pi_i')$ and $\mathrm{KL}(\pi, \pi')=\sum_{i=1}^2\mathrm{KL}(\pi_i, \pi_i')$, respectively.

\section{Mutant Follow the Regularized Leader}
In this section, we introduce {\it Mutant Follow the Regularized Leader} (M-FTRL), which is inspired by the RMD~\citep{Hofbauer1998,zagorsky:plosone:2013}. 
Let us see what happens in a biased version of the Rock-Paper-Scissors game, see Table~\ref{tab:biased-rps}.
Figure~\ref{fig:learningdynamics} compares trajectories of RD and RMD with varying mutation parameters $\mu$ (see (\ref{eq:rmd} for the differential equation of RMD).
Note that $\mu$ represents the parameter that controls the strength of mutation.
Figure~\ref{fig:FTRL-trajectory} shows that the trajectories form a cycle and never converge to the Nash equilibrium because the game is intransitive. 
Note, however, that the time-averaged trajectory of FTRL converges to interior Nash equilibria in two-player zero-sum games~\citep{hofbauer:mor:2009}. 
In contrast, Figures~\ref{fig:M-FTRL-trajectory-0.01} and \ref{fig:M-FTRL-trajectory-0.1} exhibit a clear convergence to the unique stationary point, which is almost equivalent to the interior Nash equilibrium (the red dot) without taking the time average. As the mutation parameter increases to $1.0$, although the stationary point becomes far from the Nash equilibrium, it is still asymptotically stable in Figure~\ref{fig:M-FTRL-trajectory-1.0}. 
Thus, mutation is expected to ensure that the trajectory of a learning dynamics reaches an approximated equilibrium.

\subsection{Algorithm}
\begin{figure}[t!]
\begin{algorithm}[H]
    \caption{Mutant Follow the Regularized Leader with adaptive reference strategies for player $i$.}
    \label{alg:m-ftrl}
    \begin{algorithmic}[1]
    \Require{Time horizon $T$, learning rate $\eta$, regularization function $\psi_i$, mutation parameter $\mu$, update frequency $N$, initial strategy $\pi_i^0$}
    \State $c_i\gets \left(\frac{1}{|A_i|}\right)_{a_i \in A_i}$
    \State $\tau \gets 0$
    \State Initialize $z_i^0$ so that $\pi_i^{0} = \argmax_{p\in \Delta(A_i)} \left\{\left\langle z_i^0, p\right\rangle - \psi_i(p)\right\}$
    \For{$t=1,2,\cdots, T$}
        \State Compute strategy $\pi_i^t$ by $$\pi_i^t = \argmax_{p\in \Delta(A_i)} \left\{\left\langle z_i^t, p\right\rangle - \psi_i(p)\right\}$$
        \For{$a \in A_i$}
            \State $\!z_i^{t+1}(a)\!\gets\! z_i^{t}(a) \!+ \eta \!\left(\! q_i^{\pi^t}(a) \!+\! \frac{\mu}{\pi_i^t(a)}\!\left(\!c_i(a)\!-\!\pi_i^t(a)\right)\!\right)\!$
        \EndFor
        \State $\tau \gets \tau + 1$
        \If{$\tau = N$}
            \State $c_i \gets \pi_i^t$
            \State $\tau\gets 0$
        \EndIf
    \EndFor
    \end{algorithmic}
\end{algorithm}
\end{figure}
We propose a discrete-time version of the M-FTRL algorithm under two feedback cases: full-information feedback and bandit feedback.
First, we provide the strategy update rule under full-information feedback:
\begin{align}
    \label{eq:m-ftrl_discrete}
    \pi_i^t &= \argmax_{p\in \Delta(A_i)} \left\{\eta\left\langle \sum_{s=1}^{t-1}q_i^{\mu, s}, p\right\rangle - \psi_i(p)\right\}, \\
    q_i^{\mu, s}(a_i) &= q_i^{\pi^s}(a_i) + \frac{\mu}{\pi_i^s(a_i)}\left(c_i(a_i)-\pi_i^s(a_i)\right), \nonumber
\end{align}
where $\eta>0$ is the learning rate, $\mu>0$ is the {\it mutation parameter}, and $c_i\in \Delta^{\circ}(A_i)$ is the {\it reference strategy}.

As shown in Figure \ref{fig:M-FTRL-trajectory-0.01}-\ref{fig:M-FTRL-trajectory-1.0}, strategies $\pi_i^t$ updated by (\ref{eq:m-ftrl_discrete}) would converge to the stationary point, which is different from the Nash equilibrium of the original game.
The stationary point is a $2\mu$-Nash equilibrium of the original game, and the stationary point is not Nash equilibrium unless $(c_1, c_2)$ is a Nash equilibrium (see Theorem \ref{thm:expoitability_bound}).
Therefore, for convergence to a Nash equilibrium of the original game, we introduce a technique to adapt the reference strategy.
That is, we copy probabilities from $\pi_i^t$ into $c_i$ every $N(\leq T)$ iterations.
This technique is similar to the direct convergence method by \citep{perolat2021poincare}.
The pseudo-code of our algorithm with adaptive reference strategies is presented in Algorithm \ref{alg:m-ftrl}.

Under bandit feedback, each player $i$ needs to estimate $q_i^{\mu, t} = \left(q_i^{\pi^t}(a_i) + \frac{\mu}{\pi_i^t(a_i)}\left(c_i(a_i)-\pi_i^t(a_i)\right)\right)_{a_i\in A_i}$ from the realized utility $u_i(a_1^t, a_2^t)$.
Similarly to \citep{wei2018more,ito2021parameter}, we construct the following estimator $\hat{q}_i^{\mu, t}$:
\begin{align}
    \label{eq:estimator}
    \!\hat{q}_i^{\mu, t}(a_i) \!=\! \frac{u_i(a_1^t, a_2^t)}{\pi_i^t(a_i^t)}\!\mathds{1}[a_i = a_i^t] \!+\! \frac{\mu}{\pi_i^t(a_i)}\!\left(c_i(a_i) \!-\! \pi_i^t(a_i)\right).
\end{align}
It is easy to confirm that $\hat{q}_i^{\mu, t}$ is an unbiased estimator of $q_i^{\mu, t}$.
Under bandit feedback, M-FTRL updates the strategy $\pi_i^t$ by the following update rule, which uses $\hat{q}_i^{\mu, t}$ instead of $q_i^{\mu, t}$ in (\ref{eq:m-ftrl_discrete}):
\begin{align*}
    \pi_i^t &= \argmax_{p\in \Delta(A_i)} \left\{\eta\left\langle \sum_{s=1}^{t-1}\hat{q}_i^{\mu, s}, p\right\rangle - \psi_i(p)\right\}.
\end{align*}
Note that M-FTRL does not require any information about the opponent's strategy $\pi_{-i}^t$ under bandit feedback.

\section{Theoretical Analysis}
In this section, we provide the theoretical relationship between RMD and M-FTRL and the last-iterate convergence guarantee of M-FTRL.
Instead of the discrete-time version of M-FTRL algorithm, we analyze the theoretical properties of the following continuous-time version of M-FTRL dynamics:
\begin{align}
    \label{eq:m-ftrl}
    \pi_i^t &= \argmax_{p\in \Delta(A_i)} \left\{\left\langle z_i^t, p\right\rangle - \psi_i(p)\right\}, \\
    z_i^t(a_i) &= \int_0^t \left(q^{\pi^s}_i(a_i) + \frac{\mu}{\pi_i^s(a_i)}\left(c_i(a_i)-\pi_i^s(a_i)\right)\right) ds. \nonumber
\end{align}

First, we show that this dynamics is a generalization of RMD \citep{Bauer:2019}.
That is, the dynamics of M-FTRL with the entropy regularizer $\psi_i(p)=\sum_{a_i\in A_i}p(a_i)\ln p(a_i)$ induces RMD:
\begin{theorem}
\label{thm:rmd}
The dynamics defined by (\ref{eq:m-ftrl}) with the entropy regularizer $\psi_i(p)=\sum_{a_i\in A_i}p(a_i)\ln p(a_i)$ is equivalent to replicator-mutator dynamics:
\begin{align}
\label{eq:rmd}
\begin{aligned}
    \frac{d}{dt}\pi_i^t(a_i) =& \pi_i^t(a_i)\left(q_i^{\pi^t}(a_i) - v_i^{\pi^t}\right) \\
    &+ \mu\left(c_i(a_i)-\pi_i^t(a_i)\right).
\end{aligned}  \tag{RMD}
\end{align}
\end{theorem}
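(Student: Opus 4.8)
The plan is to compute the closed form of the choice map induced by the entropy regularizer, differentiate it along the trajectory, and simplify. This is essentially the classical calculation showing that logit/softmax dynamics equals the replicator dynamics, with the mutation term carried along. Fix player $i$. A Lagrange-multiplier computation for $\psi_i(p)=\sum_{a_i\in A_i}p(a_i)\ln p(a_i)$ shows that the maximizer in \eqref{eq:m-ftrl} is the Gibbs (logit) map
\begin{align*}
    \pi_i^t(a_i)=\frac{\exp\!\big(z_i^t(a_i)\big)}{\sum_{b_i\in A_i}\exp\!\big(z_i^t(b_i)\big)},
\end{align*}
which lies in $\Delta^{\circ}(A_i)$ whenever $z_i^t$ is finite, so the term $\mu/\pi_i^t(a_i)$ appearing in \eqref{eq:m-ftrl} is well defined along the trajectory.

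Next I would observe that $z_i^t(a_i)$ is $C^1$ in $t$: its integrand $q_i^{\pi^s}(a_i)+\frac{\mu}{\pi_i^s(a_i)}(c_i(a_i)-\pi_i^s(a_i))$ is continuous in $s$ because $\pi^s$ is (being the image of the continuous curve $z^s$ under the smooth Gibbs map), so by the fundamental theorem of calculus $\frac{d}{dt}z_i^t(a_i)=q_i^{\pi^t}(a_i)+\frac{\mu}{\pi_i^t(a_i)}(c_i(a_i)-\pi_i^t(a_i))$. Differentiating the Gibbs map by the quotient rule then gives the standard identity
\begin{align*}
    \frac{d}{dt}\pi_i^t(a_i)=\pi_i^t(a_i)\left(\frac{d}{dt}z_i^t(a_i)-\sum_{b_i\in A_i}\pi_i^t(b_i)\,\frac{d}{dt}z_i^t(b_i)\right).
\end{align*}

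It then remains to evaluate the two pieces. Substituting $\frac{d}{dt}z_i^t$ and using $\sum_{b_i\in A_i}\pi_i^t(b_i)q_i^{\pi^t}(b_i)=v_i^{\pi^t}$ together with $\sum_{b_i\in A_i}c_i(b_i)=\sum_{b_i\in A_i}\pi_i^t(b_i)=1$, the weighted average collapses to $\sum_{b_i\in A_i}\pi_i^t(b_i)\frac{d}{dt}z_i^t(b_i)=v_i^{\pi^t}+\mu\sum_{b_i\in A_i}\big(c_i(b_i)-\pi_i^t(b_i)\big)=v_i^{\pi^t}$; the mutation contribution cancels in the normalization precisely because $c_i$ and $\pi_i^t$ are both probability vectors. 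Plugging this back in yields
\begin{align*}
    \frac{d}{dt}\pi_i^t(a_i)=\pi_i^t(a_i)\big(q_i^{\pi^t}(a_i)-v_i^{\pi^t}\big)+\mu\big(c_i(a_i)-\pi_i^t(a_i)\big),
\end{align*}
which is exactly \eqref{eq:rmd}. For the correspondence to be a genuine equivalence of dynamical systems — every RMD trajectory also arising from \eqref{eq:m-ftrl} — I would start from a solution of \eqref{eq:rmd} with $\pi_i^0\in\Delta^{\circ}(A_i)$ (which exists and is unique since the right-hand side of \eqref{eq:rmd} is smooth on the compact simplex, and which stays in the interior because $\frac{d}{dt}\pi_i^t(a_i)>0$ whenever $\pi_i^t(a_i)$ is small enough), set $z_i^t(a_i):=\ln\pi_i^t(a_i)+\int_0^t v_i^{\pi^s}\,ds$, and check directly that this $z_i^t$ satisfies the integral equation in \eqref{eq:m-ftrl}; since the added term $\int_0^t v_i^{\pi^s}\,ds$ is common to all coordinates it does not affect the Gibbs map, so $\pi_i^t=\argmax_{p\in\Delta(A_i)}\{\langle z_i^t,p\rangle-\psi_i(p)\}$ still holds and $(\pi^t)$ solves \eqref{eq:m-ftrl}.

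The argument is mostly bookkeeping; the points that need genuine care are (i) well-definedness of the dynamics, i.e.\ that $\pi_i^t$ never reaches the boundary and $z_i^t$ stays finite for all $t$ — both handled by the Gibbs form together with the interior-invariance of \eqref{eq:rmd} — and (ii) the observation that the $-v_i^{\pi^t}$ term in \eqref{eq:rmd} is produced by the normalization in the derivative of the Gibbs map rather than inserted by hand. I expect the only mildly delicate step to be the slight circularity in asserting continuity of $\pi^s$ before differentiating $z^t$, which is dispatched by invoking existence/uniqueness for \eqref{eq:rmd}, whose vector field is locally Lipschitz on $\Delta^{\circ}(A_i)$ (indeed smooth on the whole simplex).
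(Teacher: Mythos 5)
Your proof is correct and follows essentially the same route as the paper's: derive the softmax/Gibbs form of the choice map via Lagrange multipliers, differentiate it by the quotient rule, and observe that the mutation term drops out of the normalization because $c_i$ and $\pi_i^t$ both sum to one. The extra care you take over well-posedness and the converse direction (reconstructing $z_i^t$ from an RMD trajectory, modulo the initial condition $z_i^0$) goes beyond the paper, which only verifies the forward implication.
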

The proof of this theorem is shown in Appendix \ref{sec:appendix_proof_thm_rmd}.

From here, we derive the relationship between the stationary point $\pi^{\mu}$ of (\ref{eq:rmd}) (i.e., the strategy profile that satisfies $\frac{d}{dt}\pi_i^{\mu}(a_i)=0$ for all $i\in \{1,2\}$ and $a_i\in A_i$) and the updated strategy profile $\pi^t$.
Note that, from Lemma 3.3 in \citep{Bauer:2019}, for any $\mu>0$ there exists $\pi^{\mu}\in \prod_{i=1}^2\Delta^{\circ}(A_i)$ such that $\pi^{\mu}$ is a stationary point of (\ref{eq:rmd}).
Thus, $\pi^{\mu}$ is well-defined.
We first derive the time derivative of the (sum of) Bregman divergence between $\pi^{\mu}$ and $\pi^t$:
\begin{theorem}
\label{thm:bregman_div}
Let $\pi^{\mu}\in \prod_{i=1}^2\Delta(A_i)$ be a stationary point of (\ref{eq:rmd}).
Then, $\pi^t$ updated by M-FTRL satisfies that:
\begin{align*}
    &\frac{d}{dt}D_{\psi}(\pi^{\mu}, \pi^t) \\
    &= - \mu\sum_{i=1}^2\sum_{a_i\in A_i}c_i(a_i)\left(\sqrt{\frac{\pi_i^t(a_i)}{\pi_i^{\mu}(a_i)}}-\sqrt{\frac{\pi_i^{\mu}(a_i)}{\pi_i^t(a_i)}}\right)^2.
\end{align*}
Furthermore, if the regularizer is entropy $\psi_i(p)=\sum_{a_i\in A_i}p(a_i)\ln p(a_i)$, then $\pi^t$ satisfies that:
\begin{align*}
    \frac{d}{dt}\mathrm{KL}(\pi^{\mu}, \pi^t) \leq -\mu \xi\mathrm{KL}(\pi^{\mu}, \pi^t),
\end{align*}
where $\xi=\min_{i\in \{1,2\}, a_i\in A_i}\frac{c_i(a_i)}{\pi_i^{\mu}(a_i)}$.
\end{theorem}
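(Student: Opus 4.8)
The plan is to differentiate $D_{\psi}(\pi^{\mu},\pi^t)$ along the M-FTRL flow and reduce it to a sum over coordinates. First I would write $D_{\psi_i}(\pi_i^{\mu},\pi_i^t)=\psi_i(\pi_i^{\mu})-\psi_i(\pi_i^t)-\langle\nabla\psi_i(\pi_i^t),\pi_i^{\mu}-\pi_i^t\rangle$ and differentiate in $t$; the term $-\langle\nabla\psi_i(\pi_i^t),\dot\pi_i^t\rangle$ produced by the second summand cancels the one produced by the third, leaving $\frac{d}{dt}D_{\psi_i}(\pi_i^{\mu},\pi_i^t)=-\langle\nabla^2\psi_i(\pi_i^t)\dot\pi_i^t,\pi_i^{\mu}-\pi_i^t\rangle$. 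Next I would invoke the first-order optimality condition of the FTRL map: since $\pi_i^t\in\Delta^{\circ}(A_i)$ maximizes $\langle z_i^t,p\rangle-\psi_i(p)$ over the simplex, $z_i^t-\nabla\psi_i(\pi_i^t)$ is parallel to the all-ones vector $\mathbf{1}$; differentiating this relation in $t$ gives $\nabla^2\psi_i(\pi_i^t)\dot\pi_i^t=\dot z_i^t-(\text{scalar})\mathbf{1}$, and because $\langle\mathbf{1},\pi_i^{\mu}-\pi_i^t\rangle=0$ the scalar term drops, so $\frac{d}{dt}D_{\psi}(\pi^{\mu},\pi^t)=-\sum_{i}\langle\dot z_i^t,\pi_i^{\mu}-\pi_i^t\rangle$ where $\dot z_i^t(a_i)=q_i^{\pi^t}(a_i)+\frac{\mu}{\pi_i^t(a_i)}(c_i(a_i)-\pi_i^t(a_i))$.

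The second step uses that $\pi^{\mu}$ is a stationary point of (\ref{eq:rmd}): rearranging $\frac{d}{dt}\pi_i^{\mu}(a_i)=0$ yields $q_i^{\pi^{\mu}}(a_i)+\frac{\mu}{\pi_i^{\mu}(a_i)}(c_i(a_i)-\pi_i^{\mu}(a_i))=v_i^{\pi^{\mu}}$, i.e. the mutation-perturbed payoff vector evaluated at $\pi^{\mu}$ equals the constant vector $v_i^{\pi^{\mu}}\mathbf{1}$. Adding and subtracting this constant inside $\langle\dot z_i^t,\pi_i^{\mu}-\pi_i^t\rangle$, the constant part contributes $v_i^{\pi^{\mu}}\langle\mathbf{1},\pi_i^{\mu}-\pi_i^t\rangle=0$, so $\frac{d}{dt}D_{\psi}(\pi^{\mu},\pi^t)=-\sum_i\langle q_i^{\pi^t}-q_i^{\pi^{\mu}},\pi_i^{\mu}-\pi_i^t\rangle-\mu\sum_i\sum_{a_i}c_i(a_i)\big(\tfrac{1}{\pi_i^t(a_i)}-\tfrac{1}{\pi_i^{\mu}(a_i)}\big)(\pi_i^{\mu}(a_i)-\pi_i^t(a_i))$. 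The first (game) sum vanishes: writing the payoff matrix $M=(u_1(a_1,a_2))_{a_1,a_2}$ so that $q_1^{\pi}=M\pi_2$ and $q_2^{\pi}=-M^{\top}\pi_1$, it equals $-(\pi_1^{\mu}-\pi_1^t)^{\top}M(\pi_2^t-\pi_2^{\mu})-(\pi_2^{\mu}-\pi_2^t)^{\top}M^{\top}(\pi_1^t-\pi_1^{\mu})=0$ by bilinearity and the zero-sum structure. For the second (mutation) sum, the elementary identity $\big(\tfrac1a-\tfrac1b\big)(b-a)=\tfrac ab+\tfrac ba-2=\big(\sqrt{a/b}-\sqrt{b/a}\big)^2$ with $a=\pi_i^t(a_i)$, $b=\pi_i^{\mu}(a_i)$ turns it into exactly the claimed expression.

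For the entropy case, $D_{\psi}=\mathrm{KL}$, so the first identity already gives $\frac{d}{dt}\mathrm{KL}(\pi^{\mu},\pi^t)=-\mu\sum_i\sum_{a_i}c_i(a_i)\big(\sqrt{\pi_i^t(a_i)/\pi_i^{\mu}(a_i)}-\sqrt{\pi_i^{\mu}(a_i)/\pi_i^t(a_i)}\big)^2$. Bounding $c_i(a_i)=\frac{c_i(a_i)}{\pi_i^{\mu}(a_i)}\pi_i^{\mu}(a_i)\ge\xi\,\pi_i^{\mu}(a_i)$ and using $\pi_i^{\mu}(a_i)\big(\sqrt{\pi_i^t/\pi_i^{\mu}}-\sqrt{\pi_i^{\mu}/\pi_i^t}\big)^2=\frac{(\pi_i^{\mu}(a_i)-\pi_i^t(a_i))^2}{\pi_i^t(a_i)}$ identifies the residual sum as $\sum_i\chi^2(\pi_i^{\mu},\pi_i^t)$, which dominates $\sum_i\mathrm{KL}(\pi_i^{\mu},\pi_i^t)=\mathrm{KL}(\pi^{\mu},\pi^t)$ via $\ln x\le x-1$; hence $\frac{d}{dt}\mathrm{KL}(\pi^{\mu},\pi^t)\le-\mu\xi\,\mathrm{KL}(\pi^{\mu},\pi^t)$. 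The main obstacle I anticipate is making Step 1 rigorous — justifying that $\pi_i^t$ and $\pi_i^{\mu}$ remain in the relative interior so that the Lagrange-multiplier form of the optimality condition and the $t$-differentiation of $\nabla\psi_i(\pi_i^t)$ are legitimate; once the reduction to $-\sum_i\langle\dot z_i^t,\pi_i^{\mu}-\pi_i^t\rangle$ is in hand, the remainder is algebra together with the two standard facts (bilinearity/zero-sum and $\mathrm{KL}\le\chi^2$).
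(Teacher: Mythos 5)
Your proposal is correct and follows essentially the same route as the paper: both reduce $\frac{d}{dt}D_{\psi}(\pi^{\mu},\pi^t)$ to $\sum_i\langle \dot z_i^t,\pi_i^t-\pi_i^{\mu}\rangle$ (you by differentiating the Lagrange/KKT condition, the paper via the convex conjugate and the maximizing-argument identity $\nabla\psi_i^{\ast}(z_i^t)=\pi_i^t$), then use stationarity of $\pi^{\mu}$ plus the zero-sum structure to kill the game terms and the identity $\frac{a}{b}+\frac{b}{a}-2=(\sqrt{a/b}-\sqrt{b/a})^2$ for the mutation terms. The only substantive variations are cosmetic: you cancel the game part by bilinear antisymmetry rather than through the paper's Lemma \ref{lem:rmd_property}, and your termwise $\ln x\le x-1$ bound (i.e., $\mathrm{KL}\le\chi^2$) is a slightly more direct version of the paper's $x\ge\ln(1+x)$ plus Jensen step; the interiority concerns you flag are handled in the paper by Lemma \ref{lem:stationary_point_rmd}.
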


The first statement implies that $\frac{d}{dt}D_{\psi}(\pi^{\mu}, \pi^t)=0$ holds if and only if $\pi^t=\pi^{\mu}$, and $\forall \pi^t\neq \pi^{\mu}, \frac{d}{dt}D_{\psi}(\pi^{\mu}, \pi^t)<0$.
Thus, by Lyapunov arguments \citep{khalil2015nonlinear}, the Bregman divergence between $\pi^{\mu}$ and $\pi^t$ converges to $0$, and then $\pi^t$ converges to $\pi^{\mu}$.
Note that Theorem \ref{thm:bregman_div} holds for all stationary points of (\ref{eq:rmd}).
This means that for a fixed $\mu$ and $(c_i)_{i=1}^2$, the stationary point is unique.
From the second statement, we can show that exponential convergence rates can be achieved when using the entropy regularizer:
\begin{corollary}
\label{cor:KL_bound}
Assume that the regularizer is entropy $\psi_i(p)=\sum_{a_i\in A_i}p(a_i)\ln p(a_i)$.
Then, M-FTRL's trajectory converges to a stationary point of (\ref{eq:rmd}) exponentially fast, i.e., 
\begin{align*}
\mathrm{KL}(\pi^{\mu}, \pi^t) \leq \mathrm{KL}(\pi^{\mu}, \pi^0)\exp\left(-\mu \xi t\right).
\end{align*}
\end{corollary}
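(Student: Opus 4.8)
The plan is to integrate the differential inequality furnished by the second part of Theorem~\ref{thm:bregman_div}. Write $g(t) := \mathrm{KL}(\pi^{\mu}, \pi^t)$ and recall that $\xi = \min_{i\in \{1,2\}, a_i\in A_i} c_i(a_i)/\pi_i^{\mu}(a_i)$ is a strictly positive finite constant: $c_i\in \Delta^{\circ}(A_i)$ by assumption, and $\pi_i^{\mu}\in \Delta^{\circ}(A_i)$ since $\pi^{\mu}$ is an interior stationary point of (\ref{eq:rmd}) (Lemma~3.3 of \citep{Bauer:2019}). Theorem~\ref{thm:bregman_div} then states precisely that $g'(t) \le -\mu\xi\, g(t)$ for all $t \ge 0$.

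Next I would introduce the integrating factor $e^{\mu\xi t}$ and compute
\begin{align*}
\frac{d}{dt}\left(e^{\mu\xi t} g(t)\right) = e^{\mu\xi t}\left(g'(t) + \mu\xi\, g(t)\right) \le 0,
\end{align*}
where the final inequality is exactly the differential inequality above and the factor $e^{\mu\xi t}$ is positive. Hence $t \mapsto e^{\mu\xi t} g(t)$ is non-increasing on $[0,\infty)$, so $e^{\mu\xi t} g(t) \le g(0)$ for every $t\ge 0$, which rearranges to $g(t) \le g(0)\, e^{-\mu\xi t}$, i.e., $\mathrm{KL}(\pi^{\mu}, \pi^t) \le \mathrm{KL}(\pi^{\mu}, \pi^0)\exp(-\mu\xi t)$, as claimed. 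Equivalently, one may simply invoke the (integral form of) Gr\"onwall's inequality applied to $g$.

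I do not anticipate any genuine obstacle; the only point needing mild care is regularity, namely that $g$ is absolutely continuous and differentiable along the M-FTRL flow so that the monotonicity argument (or the fundamental theorem of calculus) applies. This is inherited from the smoothness of the entropy mirror map and the fact, already invoked in the discussion following Theorem~\ref{thm:bregman_div}, that the trajectory $\pi^t$ remains in the interior $\prod_i \Delta^{\circ}(A_i)$, where $\mathrm{KL}(\pi^{\mu}, \cdot)$ is smooth. If one wishes to avoid differentiability assumptions altogether, the same conclusion follows by running the monotonicity argument with the upper Dini derivative of $e^{\mu\xi t} g(t)$. Beyond this, the proof is the single-line integration displayed above.
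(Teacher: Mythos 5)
Your proof is correct and matches the paper's (implicit) argument: the paper presents Corollary~\ref{cor:KL_bound} as an immediate consequence of the differential inequality $\frac{d}{dt}\mathrm{KL}(\pi^{\mu},\pi^t)\le -\mu\xi\,\mathrm{KL}(\pi^{\mu},\pi^t)$ from Theorem~\ref{thm:bregman_div}, integrated via Gr\"onwall's inequality exactly as you do. Your added remarks on the interiority of $\pi^{\mu}$ and the regularity of $t\mapsto\mathrm{KL}(\pi^{\mu},\pi^t)$ are consistent with the paper's Lemma~\ref{lem:stationary_point_rmd} and require no further justification.
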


Finally, combining this corollary and Lemma 3.5 in \citep{Bauer:2019}, we can derive the exploitability bound of $\pi^t$.
\begin{theorem}
Assume that the regularizer is entropy $\psi_i(p)=\sum_{a_i\in A_i}p(a_i)\ln p(a_i)$.
Then, the exploitability for M-FTRL is bounded as:
\label{thm:expoitability_bound}
\begin{align*}
    &\mathrm{exploit}(\pi^t) \leq 2\mu \\
    &+ 2u_{\max}\sqrt{(\ln 2)\mathrm{KL}(\pi^{\mu}, \pi^0)}\exp\left(-\frac{\mu\xi}{2}  t\right).
\end{align*}
\end{theorem}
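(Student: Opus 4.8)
The plan is to bound the exploitability of $\pi^t$ by a triangle-inequality-style argument that splits into two contributions: the gap between $\pi^t$ and the stationary point $\pi^{\mu}$, and the intrinsic ``approximation error'' of $\pi^{\mu}$ as an equilibrium of the original game. The second piece is exactly what Lemma 3.5 in \citep{Bauer:2019} controls: it tells us that $\pi^{\mu}$ is a $2\mu$-Nash equilibrium of the underlying game (consistent with the discussion after (\ref{eq:m-ftrl_discrete})), i.e. $\mathrm{exploit}(\pi^{\mu}) \le 2\mu$. So the whole task reduces to showing that $\mathrm{exploit}$ is continuous in the strategy profile with an explicit modulus, and then plugging in the exponential decay of $\mathrm{KL}(\pi^{\mu}, \pi^t)$ from Corollary \ref{cor:KL_bound}.

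First I would establish a Lipschitz-type estimate for exploitability in terms of $\|\pi^t_i - \pi^{\mu}_i\|_1$. Writing $\mathrm{exploit}(\pi) = \max_{\tilde\pi_1} v_1^{\tilde\pi_1,\pi_2} + \max_{\tilde\pi_2} v_2^{\pi_1,\tilde\pi_2}$, note that each term is a maximum of affine functions of the opponent's strategy, hence convex and Lipschitz; since $|u_i|\le u_{\max}$, perturbing $\pi_{-i}$ in $\ell_1$ by $\delta$ changes $\max_{\tilde\pi_i} v_i^{\tilde\pi_i,\pi_{-i}}$ by at most $u_{\max}\delta$. Summing the two players' contributions and using that the ``best response'' terms only depend on the opponent, I get $\mathrm{exploit}(\pi^t) \le \mathrm{exploit}(\pi^{\mu}) + u_{\max}\bigl(\|\pi^t_1 - \pi^{\mu}_1\|_1 + \|\pi^t_2 - \pi^{\mu}_2\|_1\bigr)$. (One should be a little careful that the best-response operator for player $i$ reads off $\pi_{-i}$, so the cross terms match up correctly, but this is routine bookkeeping.)

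Next I would convert the $\ell_1$ distance into the KL divergence via Pinsker's inequality, $\|\pi^t_i - \pi^{\mu}_i\|_1 \le \sqrt{2\,\mathrm{KL}(\pi^{\mu}_i, \pi^t_i)}$ — here I need the direction $\mathrm{KL}(\pi^{\mu}, \pi^t)$ to match what Theorem \ref{thm:bregman_div} and Corollary \ref{cor:KL_bound} provide, and Pinsker applies to either argument order. Using concavity of $\sqrt{\cdot}$ to combine the two players ($\sqrt{a}+\sqrt{b}\le\sqrt{2(a+b)}$), I obtain $\|\pi^t_1-\pi^{\mu}_1\|_1 + \|\pi^t_2-\pi^{\mu}_2\|_1 \le 2\sqrt{\mathrm{KL}(\pi^{\mu},\pi^t)}$, where $\mathrm{KL}(\pi^{\mu},\pi^t)$ is the summed divergence as defined in the preliminaries. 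The reason $\ln 2$ appears in the final bound is that the entropy regularizer in this paper uses natural log, so KL is in nats; converting the Pinsker constant, or equivalently tracking the normalization in $\psi_i$, yields the factor $(\ln 2)$ inside the square root. Then Corollary \ref{cor:KL_bound} gives $\mathrm{KL}(\pi^{\mu},\pi^t) \le \mathrm{KL}(\pi^{\mu},\pi^0)\exp(-\mu\xi t)$, so $2\sqrt{\mathrm{KL}(\pi^{\mu},\pi^t)} \le 2\sqrt{(\ln 2)\mathrm{KL}(\pi^{\mu},\pi^0)}\exp(-\tfrac{\mu\xi}{2}t)$, and multiplying by $u_{\max}$ and adding $\mathrm{exploit}(\pi^{\mu})\le 2\mu$ delivers the claim.

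The main obstacle is not any single calculation but getting all the constants to line up: the exact statement of Lemma 3.5 in \citep{Bauer:2019} (whether it gives $2\mu$ directly or requires combining two $\mu$-bounds, one per player), the correct direction of the KL divergence so that Pinsker and Corollary \ref{cor:KL_bound} are compatible, and the base-$2$ versus base-$e$ conversion that produces the $\ln 2$ factor. Everything else — the Lipschitz bound on exploitability, Pinsker, concavity of the square root — is standard, so I expect the proof to be short once the interface with \citep{Bauer:2019} is pinned down.
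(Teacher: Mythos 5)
Your proposal is correct and follows essentially the same route as the paper: bound $\mathrm{exploit}(\pi^{\mu})\le 2\mu$ via Lemma 3.5 of \citep{Bauer:2019}, control $\mathrm{exploit}(\pi^t)-\mathrm{exploit}(\pi^{\mu})$ by an $\ell_1$-Lipschitz estimate (the paper does this via an add-and-subtract of $\max_{\tilde\pi_i}v_i^{\tilde\pi_i,\pi_{-i}^{\mu}}$ plus H\"older), convert to KL by Pinsker, combine the two players with $\sqrt{a}+\sqrt{b}\le\sqrt{2(a+b)}$, and finish with Corollary \ref{cor:KL_bound}. The only loose end is the $\ln 2$ constant, which in the paper comes from quoting the base-$2$ form of Pinsker's inequality (Lemma 11.6.1 of \citep{thomas2006elements}) while the paper's KL is in nats — exactly the bookkeeping issue you flagged.
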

Theorem \ref{thm:expoitability_bound} means that $\pi^t$ converges to a $2\mu$-Nash equilibrium exponentially fast.
The proof of the theorem is shown in Section \ref{sec:proof_thm_exploitability}.

\subsection{Proof Sketch of Theorem \ref{thm:bregman_div}}
We sketch below the proof of Theorem \ref{thm:bregman_div}.
The complete proof and the associated lemmas are presented in Appendix \ref{sec:appendix_proof_lem_bregman_div}-\ref{sec:appendix_proof_thm_bregman_div}.

\paragraph{Proof of the first part of Theorem \ref{thm:bregman_div}.}
First, we derive the time derivative of the Bregman divergence between $\pi\in \prod_{i=1}^2\Delta(A_i)$ and $\pi^t$:
\begin{lemma}
\label{lem:bregman_div}
For any $\pi\in \prod_{i=1}^2\Delta(A_i)$, $\pi^t$ updated by M-FTRL satisfies that:
\begin{align*}
    \frac{d}{dt}&D_{\psi}(\pi, \pi^t) \\
    =& \sum_{i=1}^2 v_i^{\pi_i^t, \pi_{-i}} + 2\mu - \mu\sum_{i=1}^2\sum_{a_i\in A_i}c_i(a_i)\frac{\pi_i(a_i)}{\pi_i^t(a_i)}.
\end{align*}
\end{lemma}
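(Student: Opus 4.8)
The plan is to differentiate $D_\psi(\pi,\pi^t)=\sum_{i=1}^2 D_{\psi_i}(\pi_i,\pi_i^t)$ in $t$, holding the comparator $\pi$ fixed, and then collapse the resulting expression using the zero-sum structure of the game. Writing $D_{\psi_i}(\pi_i,\pi_i^t)=\psi_i(\pi_i)-\psi_i(\pi_i^t)-\langle\nabla\psi_i(\pi_i^t),\pi_i-\pi_i^t\rangle$ and differentiating, the two contributions involving $\dot{\pi}_i^t$ cancel by the product rule on the inner-product term, leaving the clean identity
\begin{align*}
\frac{d}{dt}D_{\psi_i}(\pi_i,\pi_i^t) = -\left\langle \frac{d}{dt}\nabla\psi_i(\pi_i^t),\, \pi_i-\pi_i^t\right\rangle.
\end{align*}
This is the standard first step in mirror-descent/FTRL Lyapunov analyses and needs only that the trajectory $\pi_i^t$ be differentiable and remain interior, so that $\nabla\psi_i(\pi_i^t)$ is well-defined along the flow.

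Next I would invoke the first-order optimality condition of the argmax defining $\pi_i^t$ in (\ref{eq:m-ftrl}). Since $\pi_i^t$ maximizes $\langle z_i^t,p\rangle-\psi_i(p)$ over the simplex and lies in $\Delta^\circ(A_i)$, the KKT conditions give $\nabla\psi_i(\pi_i^t)=z_i^t-\lambda_i(t)\mathbf{1}$ for a scalar multiplier $\lambda_i(t)$ enforcing $\sum_{a_i}\pi_i^t(a_i)=1$. Differentiating in $t$ yields $\frac{d}{dt}\nabla\psi_i(\pi_i^t)=\dot{z}_i^t-\dot{\lambda}_i(t)\mathbf{1}$. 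Because both $\pi_i$ and $\pi_i^t$ are probability distributions, $\langle\mathbf{1},\pi_i-\pi_i^t\rangle=0$, so the multiplier term is annihilated and the identity collapses to $\frac{d}{dt}D_{\psi_i}(\pi_i,\pi_i^t)=-\langle\dot{z}_i^t,\pi_i-\pi_i^t\rangle$, where $\dot{z}_i^t(a_i)=q_i^{\pi^t}(a_i)+\frac{\mu}{\pi_i^t(a_i)}(c_i(a_i)-\pi_i^t(a_i))$ is read directly from (\ref{eq:m-ftrl}).

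The remaining work is bookkeeping on the two inner products. For $\langle\dot{z}_i^t,\pi_i^t\rangle$, the term $\sum_{a_i}\pi_i^t(a_i)q_i^{\pi^t}(a_i)=v_i^{\pi^t}$ and the mutation term telescopes to $\mu\bigl(\sum_{a_i}c_i(a_i)-\sum_{a_i}\pi_i^t(a_i)\bigr)=0$, so $\langle\dot{z}_i^t,\pi_i^t\rangle=v_i^{\pi^t}$. For $\langle\dot{z}_i^t,\pi_i\rangle$, the term $\sum_{a_i}\pi_i(a_i)q_i^{\pi^t}(a_i)=v_i^{\pi_i,\pi_{-i}^t}$ and the mutation term gives $\mu\sum_{a_i}c_i(a_i)\frac{\pi_i(a_i)}{\pi_i^t(a_i)}-\mu$. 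Summing $-\langle\dot{z}_i^t,\pi_i-\pi_i^t\rangle$ over $i=1,2$ then produces $\sum_i v_i^{\pi^t}-\sum_i v_i^{\pi_i,\pi_{-i}^t}+2\mu-\mu\sum_i\sum_{a_i}c_i(a_i)\frac{\pi_i(a_i)}{\pi_i^t(a_i)}$.

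Finally I would apply the zero-sum property twice. First, $v_1^{\pi^t}+v_2^{\pi^t}=0$ eliminates the $\sum_i v_i^{\pi^t}$ term. Second, since $u_1+u_2\equiv 0$ forces $v_1^{(\sigma_1,\sigma_2)}=-v_2^{(\sigma_1,\sigma_2)}$ for every profile, the pair $v_1^{\pi_1,\pi_2^t}$, $v_2^{\pi_2,\pi_1^t}$ is the negation of $v_1^{\pi_1^t,\pi_2}$, $v_2^{\pi_2^t,\pi_1}$, whence $-\sum_i v_i^{\pi_i,\pi_{-i}^t}=\sum_i v_i^{\pi_i^t,\pi_{-i}}$, yielding exactly the claimed expression. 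I expect the main obstacle to lie in the first half rather than the algebra: justifying that the trajectory is differentiable and stays interior so that $\nabla\psi_i(\pi_i^t)$ is smooth along the flow, and correctly arguing that the time-derivative of the Lagrange multiplier drops out against the tangent vector $\pi_i-\pi_i^t$. The re-indexing step also requires reading $v_i^{\pi_i^t,\pi_{-i}}$ as ``player $i$ plays $\pi_i^t$ while the opponent plays the comparator $\pi_{-i}$,'' which should be stated explicitly to avoid sign errors.
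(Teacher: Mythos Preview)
Your proof is correct and reaches the same key identity $\frac{d}{dt}D_\psi(\pi,\pi^t)=\sum_{i}\langle\dot z_i^t,\pi_i^t-\pi_i\rangle$ as the paper, after which the algebra (the $v_i^{\pi^t}$ cancellation, the mutation bookkeeping, and the zero-sum re-indexing $-\sum_i v_i^{\pi_i,\pi_{-i}^t}=\sum_i v_i^{\pi_i^t,\pi_{-i}}$) is identical to theirs. The only difference is in how that identity is obtained: you differentiate the primal definition of the Bregman divergence and then kill the Lagrange-multiplier term via $\langle\mathbf 1,\pi_i-\pi_i^t\rangle=0$, whereas the paper first rewrites $D_\psi(\pi,\pi^t)=\sum_i\bigl(\psi_i^\ast(z_i^t)-\langle z_i^t,\pi_i\rangle+\psi_i(\pi_i)\bigr)$ via the convex conjugate (their Lemma~\ref{lem:convex_conjugate}) and differentiates that, using $\nabla\psi_i^\ast(z_i^t)=\pi_i^t$. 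The conjugate route neatly sidesteps the concern you flagged about tracking $\dot\lambda_i(t)$ and about smoothness of $\nabla\psi_i$ along the trajectory, since $\psi_i^\ast$ is differentiable with gradient equal to the argmax regardless; your route is more hands-on and avoids invoking conjugate-function machinery. Both are standard and essentially equivalent.
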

The proof of Lemma \ref{lem:bregman_div} stems from the fact that $D_{\psi}(\pi, \pi^t)=\sum_{i=1}^2\!\left(\max_{p\in \Delta(A_i)}\left\{\left\langle z_i^t, p\right\rangle \!-\! \psi_i(p)\right\}-\langle z_i^t, \pi_i\rangle + \psi_i(\pi_i)\right)$.

Next, we derive the relationship between the expected utilities $v^{\pi^{\mu}}$ and $v^{\pi_i', \pi_{-i}^{\mu}}$ for any $\pi_i'\in \Delta(A_i)$:
\begin{lemma}
\label{lem:rmd_property}
Let $\pi^{\mu}\in \prod_{i=1}^2\Delta(A_i)$ be a stationary point of (\ref{eq:rmd}).
Then, for any $i\in \{1, 2\}$ and $\pi_i' \in \Delta(A_i)$:
\begin{align*}
    &v_i^{\pi_i',\pi_{-i}^{\mu}} = v_i^{\pi^{\mu}} + \mu - \mu\sum_{a_i\in A_i}c_i(a_i)\frac{\pi_i'(a_i)}{\pi_i^{\mu}(a_i)}.
\end{align*}
\end{lemma}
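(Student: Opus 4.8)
The plan is to extract the conditional payoff of each action directly from the stationarity equations of (\ref{eq:rmd}) and then to expand the deviation value $v_i^{\pi_i', \pi_{-i}^{\mu}}$ as a convex combination of those payoffs.

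First I would write out what it means for $\pi^{\mu}$ to be a stationary point of (\ref{eq:rmd}): for every $i \in \{1,2\}$ and every $a_i \in A_i$,
\[
0 = \pi_i^{\mu}(a_i)\bigl(q_i^{\pi^{\mu}}(a_i) - v_i^{\pi^{\mu}}\bigr) + \mu\bigl(c_i(a_i) - \pi_i^{\mu}(a_i)\bigr).
\]
Because $c_i \in \Delta^{\circ}(A_i)$, this equation forces $\pi_i^{\mu}(a_i) > 0$ for every $a_i$ (otherwise it would read $0 = \mu c_i(a_i) > 0$), so $\pi^{\mu}$ is interior, consistent with Lemma 3.3 of \citep{Bauer:2019}. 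I can therefore divide by $\pi_i^{\mu}(a_i)$ and solve, obtaining
\[
q_i^{\pi^{\mu}}(a_i) = v_i^{\pi^{\mu}} + \mu - \mu\,\frac{c_i(a_i)}{\pi_i^{\mu}(a_i)}.
\]

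Next I would use that the conditional expected utility vector $q_i^{\pi^{\mu}}$ depends only on the opponent's strategy $\pi_{-i}^{\mu}$, so that for any $\pi_i' \in \Delta(A_i)$ one has $v_i^{\pi_i', \pi_{-i}^{\mu}} = \langle \pi_i', q_i^{\pi^{\mu}}\rangle = \sum_{a_i \in A_i}\pi_i'(a_i)\,q_i^{\pi^{\mu}}(a_i)$. Substituting the displayed formula for $q_i^{\pi^{\mu}}(a_i)$ and pulling the constants $v_i^{\pi^{\mu}}$ and $\mu$ out of the sum, the identity follows at once from $\sum_{a_i}\pi_i'(a_i) = 1$.

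The argument is short and essentially mechanical, so there is no serious obstacle; the only points needing minor care are the sign bookkeeping when rearranging the mutation term $\mu(c_i(a_i) - \pi_i^{\mu}(a_i))$ and noting that interiority of $\pi^{\mu}$ is precisely what makes the division legitimate. The role of this lemma is to supply the rewriting of the deviation values: combined with Lemma \ref{lem:bregman_div} applied at $\pi = \pi^{\mu}$ (so the appearing term is $v_i^{\pi_i^t, \pi_{-i}^{\mu}}$, i.e.\ the case $\pi_i' = \pi_i^t$ of this lemma), it lets the $\sum_i v_i^{\pi_i^t, \pi_{-i}^{\mu}}$ contribution to $\tfrac{d}{dt}D_{\psi}(\pi^{\mu}, \pi^t)$ collapse into the closed form of Theorem \ref{thm:bregman_div}.
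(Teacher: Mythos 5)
Your proof is correct and follows essentially the same route as the paper's: solve the stationarity equation for $q_i^{\pi^{\mu}}(a_i)$ (using interiority of $\pi^{\mu}$ to justify the division, which the paper delegates to Lemma \ref{lem:stationary_point_rmd} while you prove it inline by the same contradiction argument) and then expand $v_i^{\pi_i',\pi_{-i}^{\mu}}$ as $\sum_{a_i}\pi_i'(a_i)q_i^{\pi^{\mu}}(a_i)$. The algebra and sign bookkeeping check out.
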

This result can be shown by the fact that $\pi^{\mu}$ is the stationary point of (\ref{eq:rmd}), i.e., $\pi_i^{\mu}(a_i)\left(q_i^{\pi^{\mu}}(a_i) - v_i^{\pi^{\mu}}\right) + \mu\left(c_i(a_i)-\pi_i^{\mu}(a_i)\right)=0$ for all $i\in \{1, 2\}$.

By combining Lemmas \ref{lem:bregman_div} and \ref{lem:rmd_property}, we can obtain:
\begin{align*}
    \frac{d}{dt}&D_{\psi}(\pi^{\mu}, \pi^t) \\
    =&  \sum_{i=1}^2 v_i^{\pi_i^t,\pi_{-i}^{\mu}} + 2\mu - \mu\sum_{i=1}^2\sum_{a_i\in A_i}c_i(a_i)\frac{\pi_i^{\mu}(a_i)}{\pi_i^t(a_i)} \\
    =& \sum_{i=1}^2v_i^{\pi^{\mu}} \!+\! 4\mu \!-\! \mu\sum_{i=1}^2\sum_{a_i\in A_i}c_i(a_i)\left(\frac{\pi_i^t(a_i)}{\pi_i^{\mu}(a_i)}+\frac{\pi_i^{\mu}(a_i)}{\pi_i^t(a_i)}\right) \\
    =& 4\mu - \mu\sum_{i=1}^2\sum_{a_i\in A_i}c_i(a_i)\left(\frac{\pi_i^t(a_i)}{\pi_i^{\mu}(a_i)}+\frac{\pi_i^{\mu}(a_i)}{\pi_i^t(a_i)}\right) \\
    =& - \mu\sum_{i=1}^2\sum_{a_i\in A_i}c_i(a_i)\left(\sqrt{\frac{\pi_i^t(a_i)}{\pi_i^{\mu}(a_i)}}-\sqrt{\frac{\pi_i^{\mu}(a_i)}{\pi_i^t(a_i)}}\right)^2,
\end{align*}
where the third equality follows from $\sum_{i=1}^2v_i^{\pi^{\mu}}=0$ by the definition of zero-sum games.
This concludes the first statement of the theorem.

\paragraph{Proof of the second part of Theorem \ref{thm:bregman_div}.}
Let us define $\xi_i=\min_{a_i\in A_i}\frac{c_i(a_i)}{\pi_i^{\mu}(a_i)}$.
From the first part of the theorem, we have:
\begin{align}
\label{eq:J_dot_KL_main}
    &\frac{d}{dt}D_{\psi}(\pi^{\mu}, \pi^t) \nonumber \\
    &= - \mu\sum_{i=1}^2\sum_{a_i\in A_i}c_i(a_i)\left(\frac{\pi_i^t(a_i)}{\pi_i^{\mu}(a_i)}+\frac{\pi_i^{\mu}(a_i)}{\pi_i^t(a_i)} - 2\right) \nonumber\\
    &= - \mu\sum_{i=1}^2\sum_{a_i\in A_i}\frac{c_i(a_i)}{\pi_i^{\mu}(a_i)}\frac{(\pi_i^t(a_i)-\pi_i^{\mu}(a_i))^2}{\pi_i^t(a_i)} \nonumber\\
    &\leq - \mu\sum_{i=1}^2\xi_i\sum_{a_i\in A_i}\frac{(\pi_i^t(a_i)-\pi_i^{\mu}(a_i))^2}{\pi_i^t(a_i)} \nonumber\\
    &\leq - \mu\sum_{i=1}^2\xi_i\ln \left(1 + \sum_{a_i\in A_i}\frac{(\pi_i^t(a_i)-\pi_i^{\mu}(a_i))^2}{\pi_i^t(a_i)}\right) \nonumber\\
    &= - \mu\sum_{i=1}^2\xi_i\ln \left(\sum_{a_i\in A_i}\pi_i^{\mu}(a_i)\frac{\pi_i^{\mu}(a_i)}{\pi_i^t(a_i)}\right) \nonumber\\
    &\leq - \mu\sum_{i=1}^2\xi_i\sum_{a_i\in A_i}\pi_i^{\mu}(a_i)\ln \left(\frac{\pi_i^{\mu}(a_i)}{\pi_i^t(a_i)}\right) \nonumber\\
    &= - \mu\sum_{i=1}^2\xi_i\mathrm{KL}(\pi_i^{\mu}, \pi_i^t) \leq - \mu\xi\sum_{i=1}^2\mathrm{KL}(\pi_i^{\mu}, \pi_i^t),
\end{align}
where the second inequality follows from $x \geq \ln(1+x)$ for all $x>0$, and the third inequality follows from the concavity of the $\ln(\cdot)$ function and Jensen's inequality for concave functions.
On the other hand, if $\psi_i(p)=\sum_{a_i\in A_i}p(a_i)\ln p(a_i)$, then $D_{\psi_i}(\pi^{\mu}, \pi^t)=\mathrm{KL}(\pi^{\mu}, \pi^t)$.
From this fact and (\ref{eq:J_dot_KL_main}), we have:
\begin{align*}
    \frac{d}{dt}\mathrm{KL}(\pi^{\mu}, \pi^t) \leq - \mu\xi\mathrm{KL}(\pi^{\mu}, \pi^t).
\end{align*}
This concludes the second statement of the theorem.
\qed

\begin{figure*}[t!]
    \centering
    \begin{minipage}[t]{0.24\textwidth}
        \centering
        \includegraphics[width=1.1\linewidth]{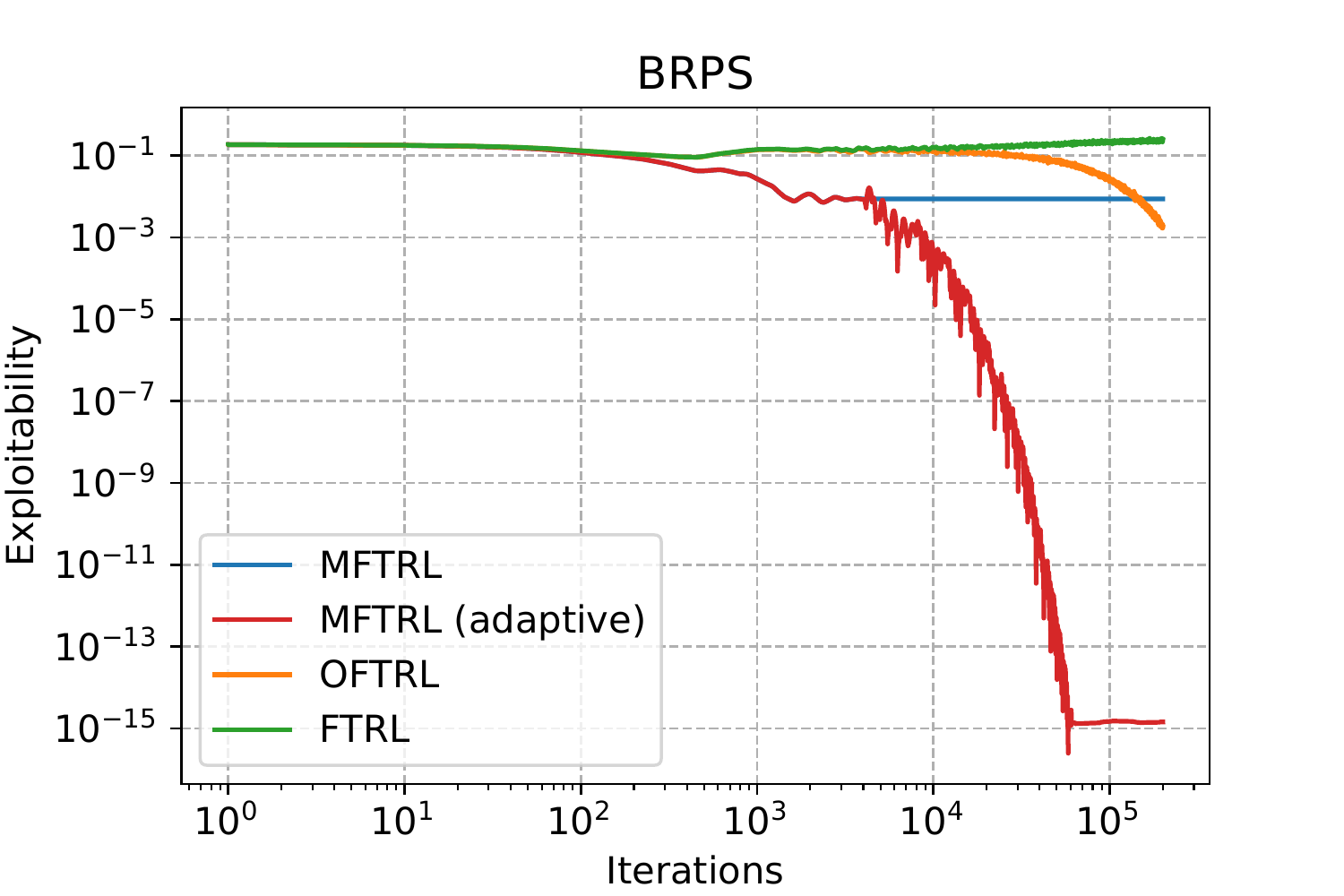}
    \end{minipage}
    \begin{minipage}[t]{0.24\textwidth}
        \centering
        \includegraphics[width=1.1\linewidth]{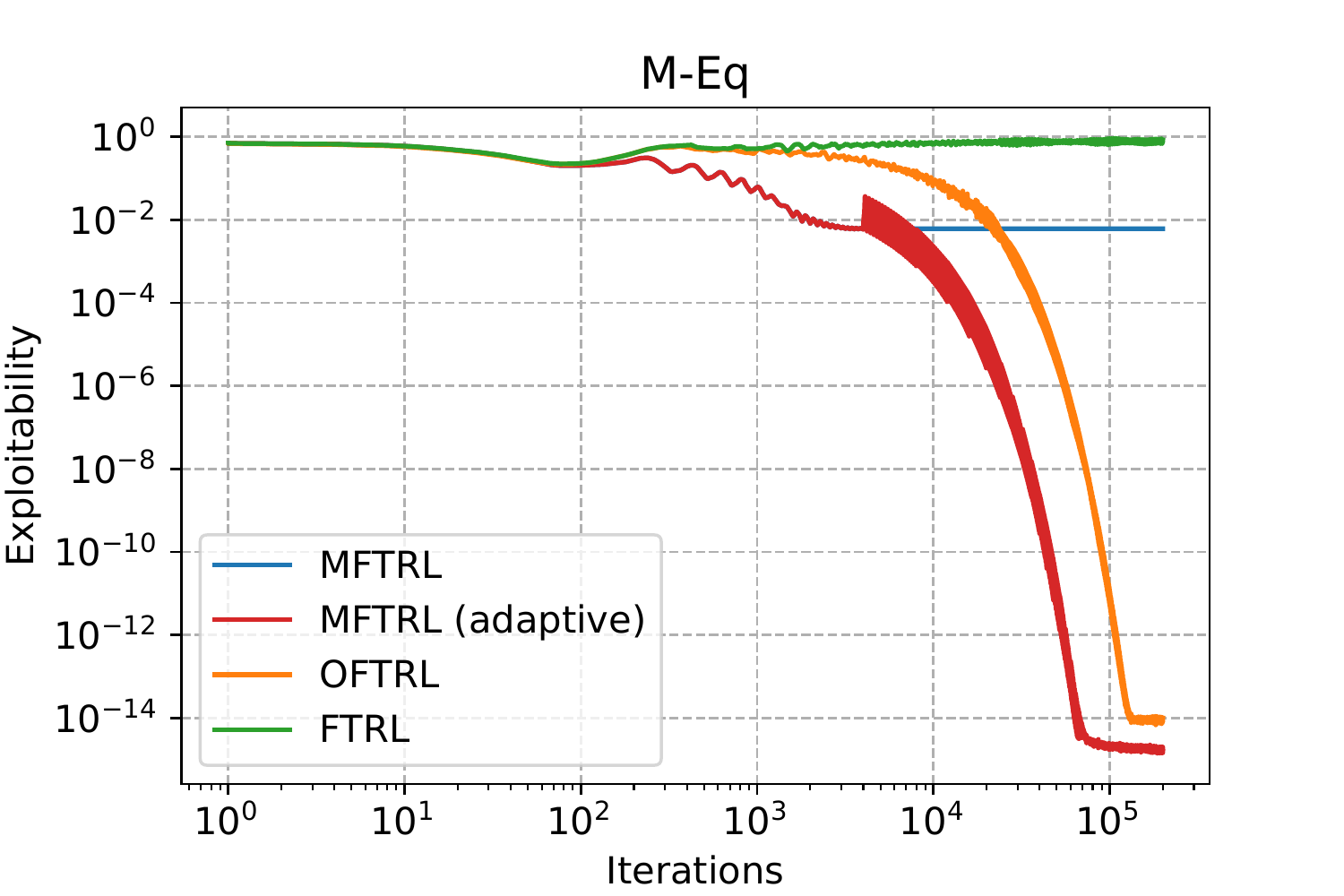}
    \end{minipage}
    \begin{minipage}[t]{0.24\textwidth}
        \centering
        \includegraphics[width=1.1\linewidth]{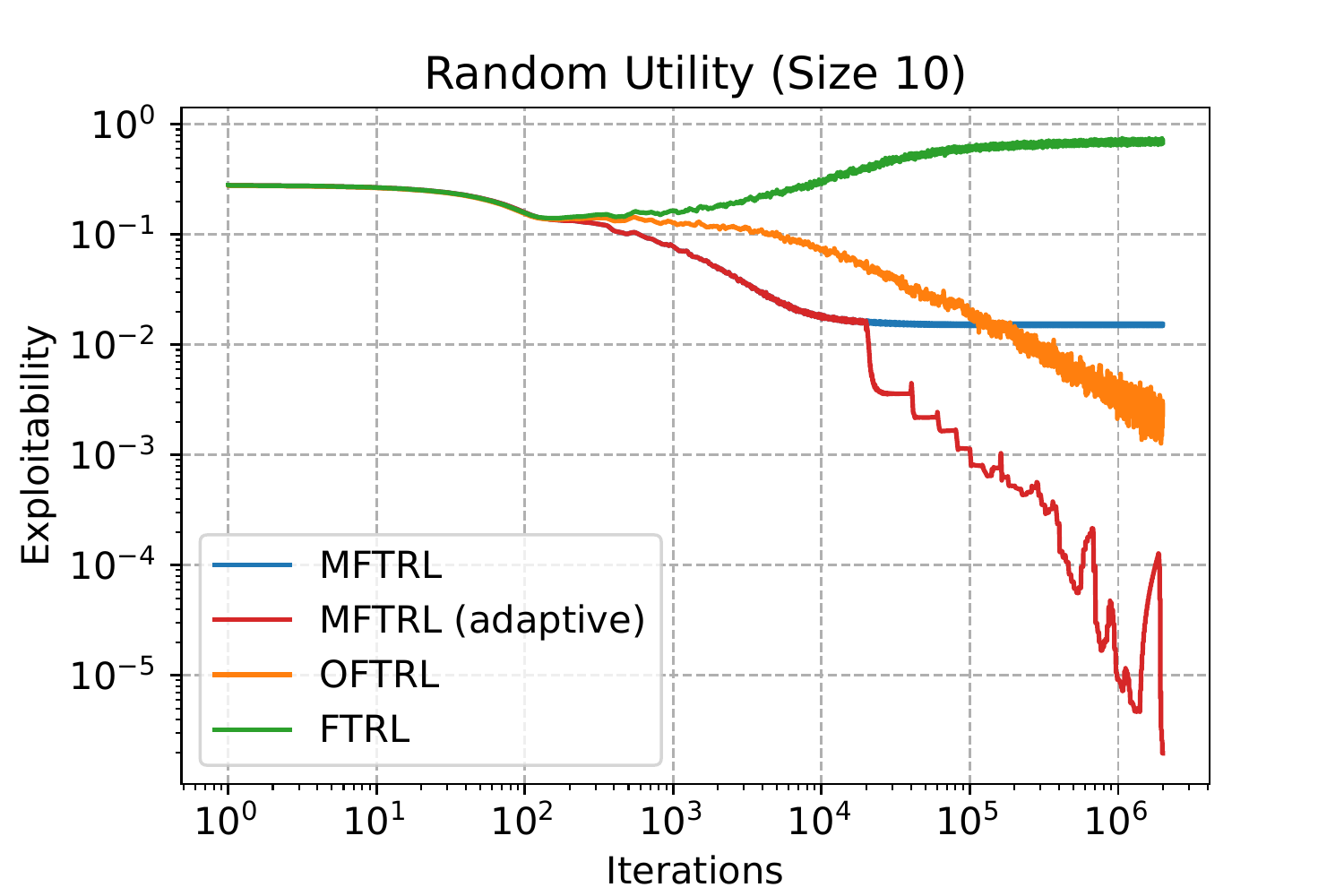}
    \end{minipage}
    \begin{minipage}[t]{0.24\textwidth}
        \centering
        \includegraphics[width=1.1\linewidth]{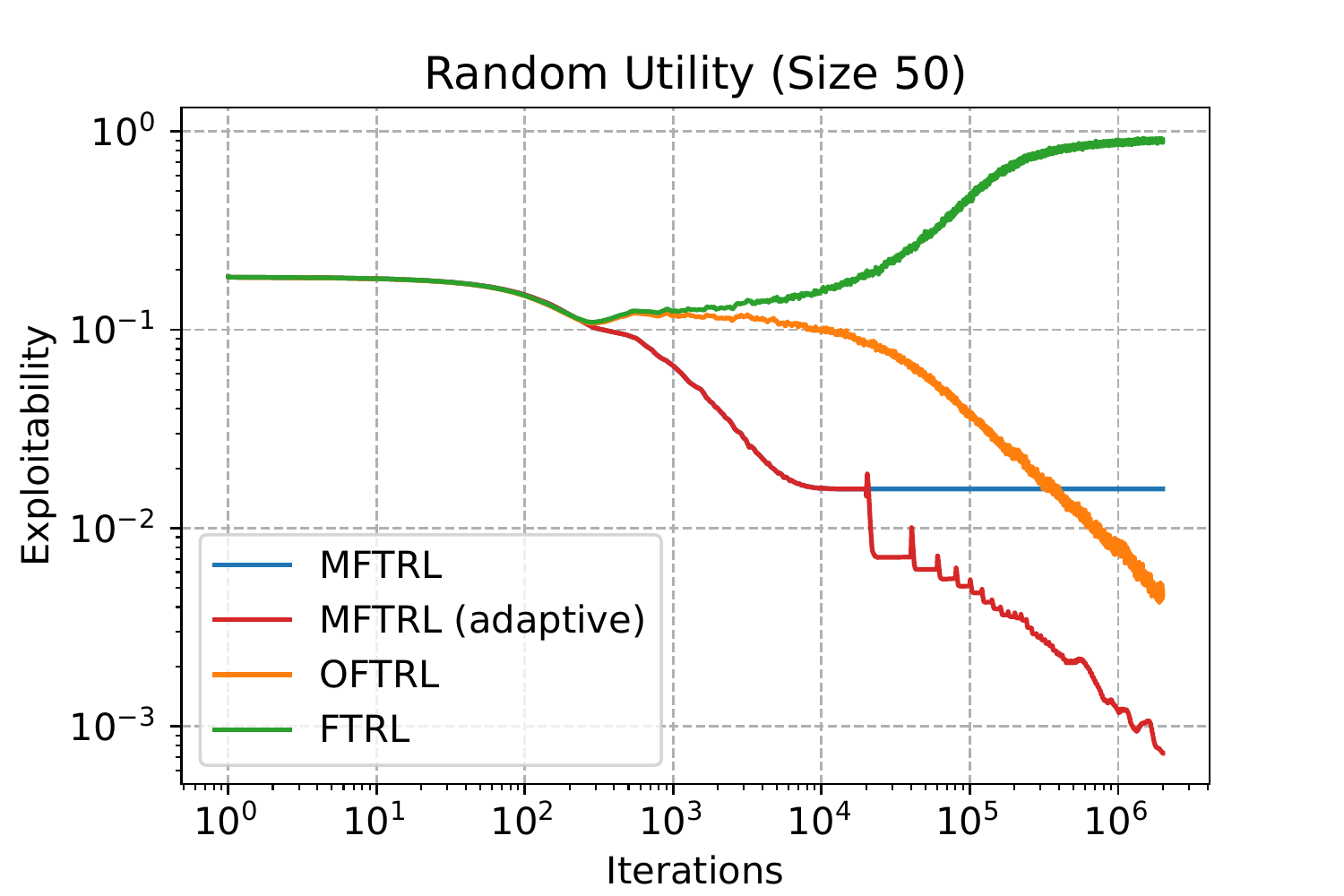}
    \end{minipage}
    \caption{
    Exploitability of $\pi^t$ for M-FTRL, FTRL, and O-FTRL under full-information feedback.
    }
    \label{fig:exploitability_full}
\end{figure*}
\begin{figure*}[t!]
    \centering
    \begin{minipage}[t]{0.24\textwidth}
        \centering
        \includegraphics[width=1.0\linewidth]{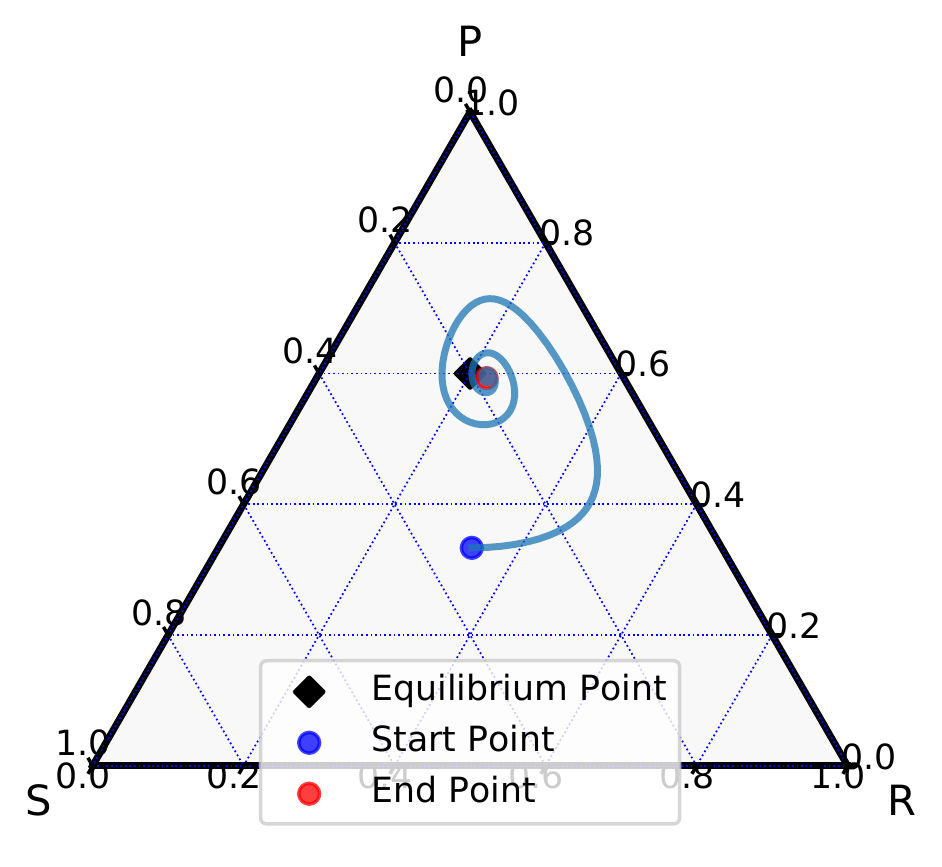}
        \subcaption{M-FTRL with a fixed reference strategy}
    \end{minipage}
    \begin{minipage}[t]{0.24\textwidth}
        \centering
        \includegraphics[width=1.0\linewidth]{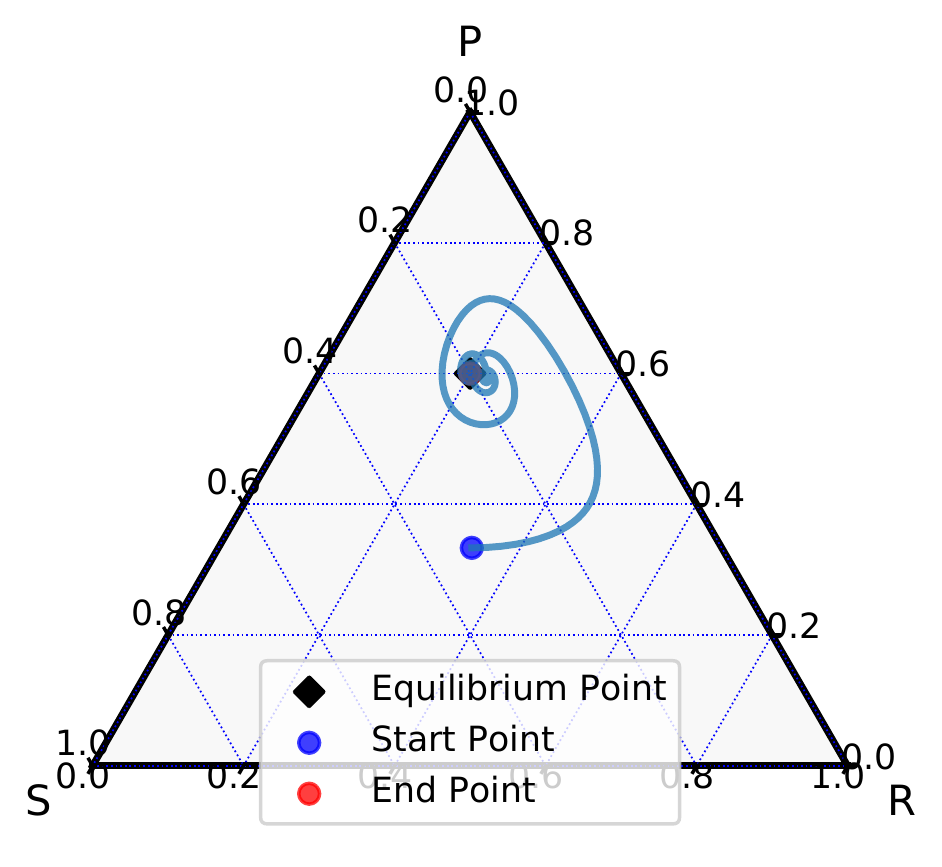}
        \subcaption{M-FTRL with adaptive reference strategies}
    \end{minipage}
    \begin{minipage}[t]{0.24\textwidth}
        \centering
        \includegraphics[width=1.0\linewidth]{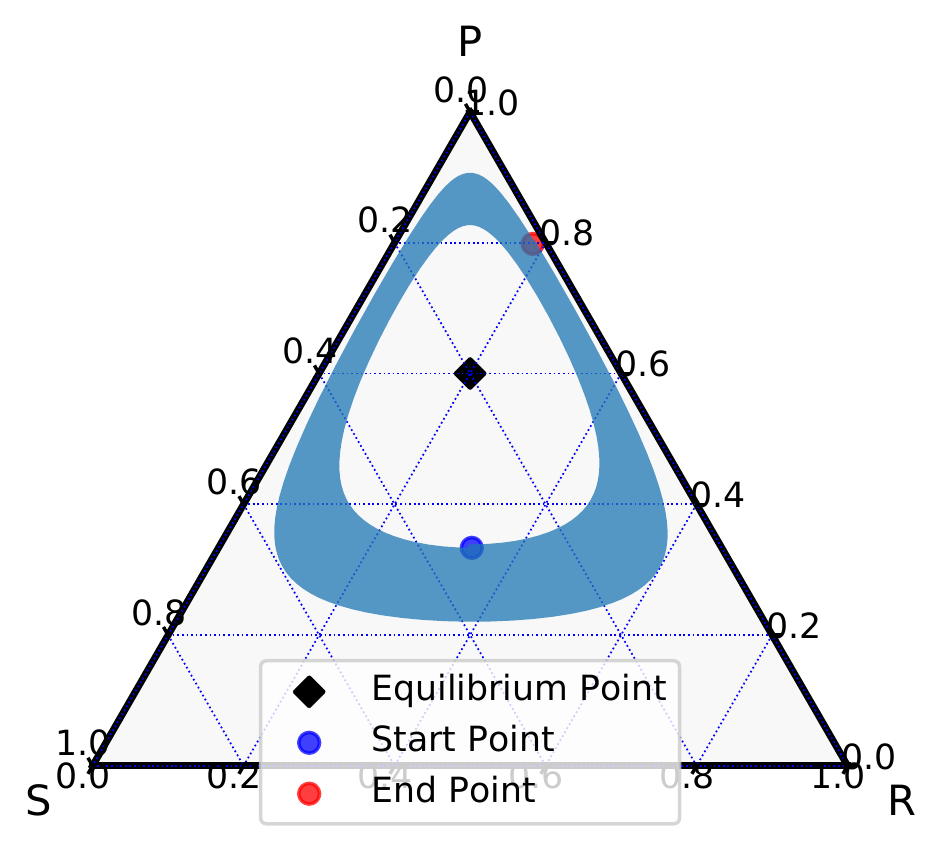}
        \subcaption{FTRL}
    \end{minipage}
    \begin{minipage}[t]{0.24\textwidth}
        \centering
        \includegraphics[width=1.0\linewidth]{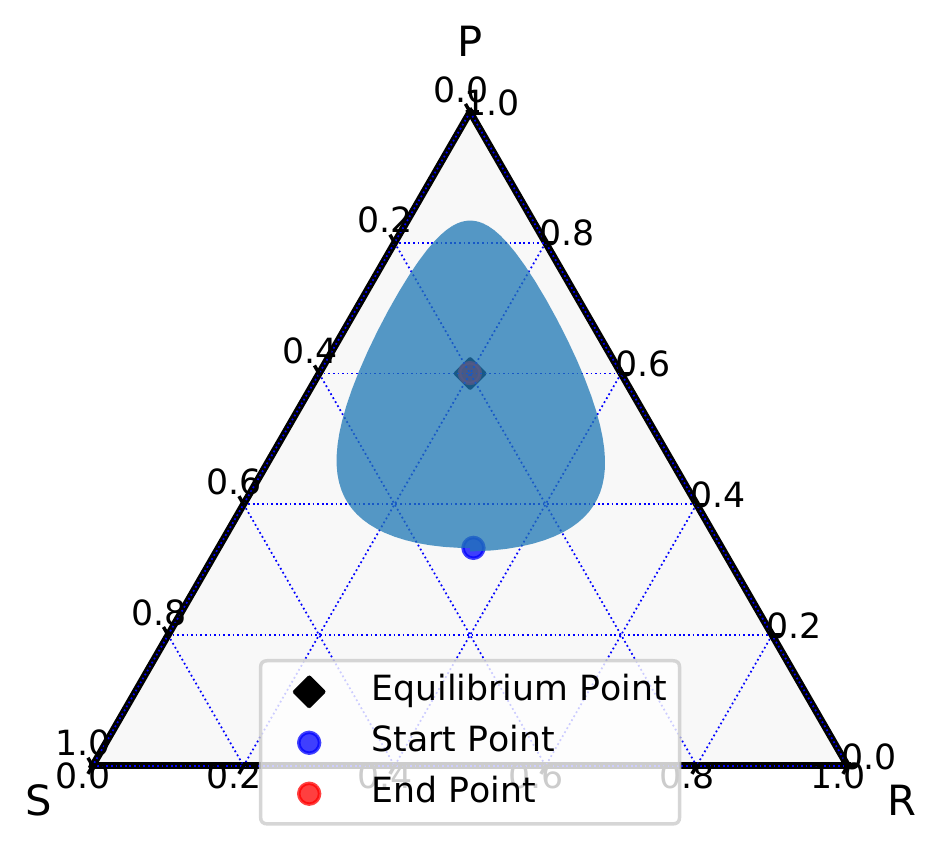}
        \subcaption{O-FTRL}
    \end{minipage}
    \caption{
    Trajectories of $\pi^t$ for M-FTRL, FTRL and O-FTRL in BRPS under full-information feedback.
    We set the initial strategy profile to $\pi_i^0=\frac{1}{|A_i|}$ for $i\in \{1, 2\}$.
    The black point represents the equilibrium strategy.
    The blue/red points represent the initial/final points, respectively.
    }
    \label{fig:trajectory_brps_full}
\end{figure*}
\begin{figure}[t!]
    \centering
    \begin{minipage}[t]{0.49\linewidth}
        \centering
        \includegraphics[width=1.0\linewidth]{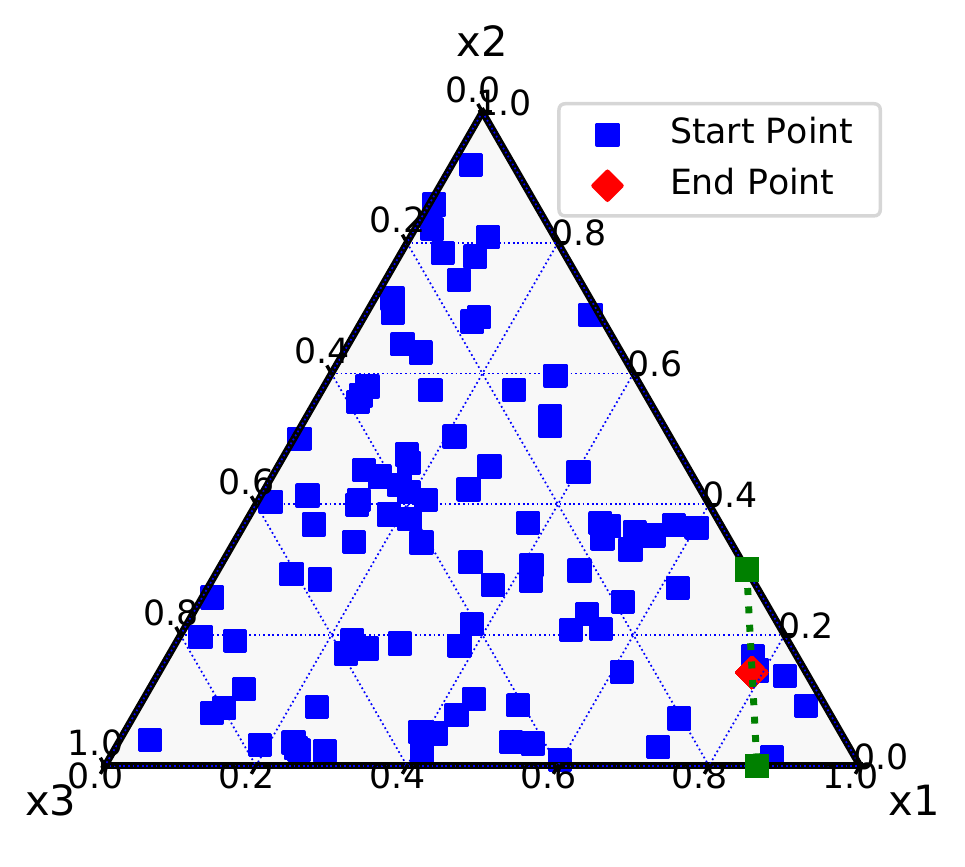}
        \subcaption{M-FTRL with a fixed reference strategy}
    \end{minipage}
    \begin{minipage}[t]{0.49\linewidth}
        \centering
        \includegraphics[width=1.0\linewidth]{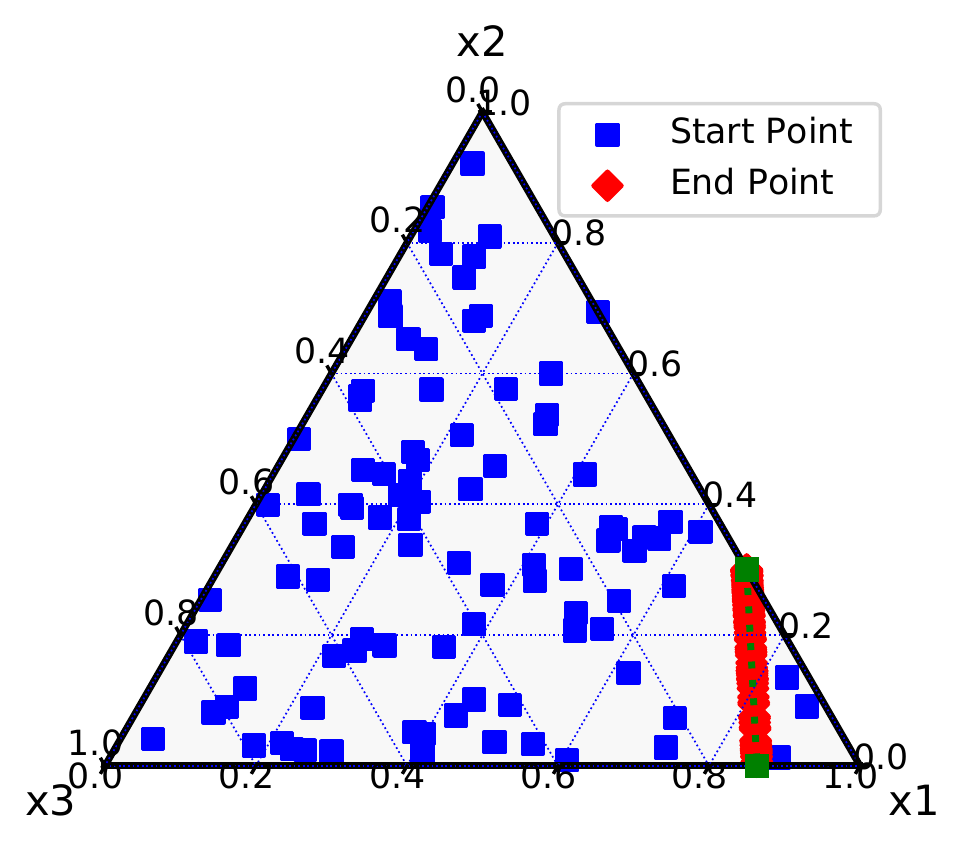}
        \subcaption{O-FTRL}
    \end{minipage}
    \caption{
    Initial strategies and final strategies for player $1$ in $100$ instances (M-Eq under full-information feedback).
    The green dashed line represents the set of equilibrium strategies for player $1$.
    The blue/red points represent the initial/final points, respectively.
    }
    \label{fig:start_end_points_multiple_full}
\end{figure}
\begin{figure*}[t!]
    \centering
    \begin{minipage}[t]{0.24\textwidth}
        \centering
        \includegraphics[width=1.1\linewidth]{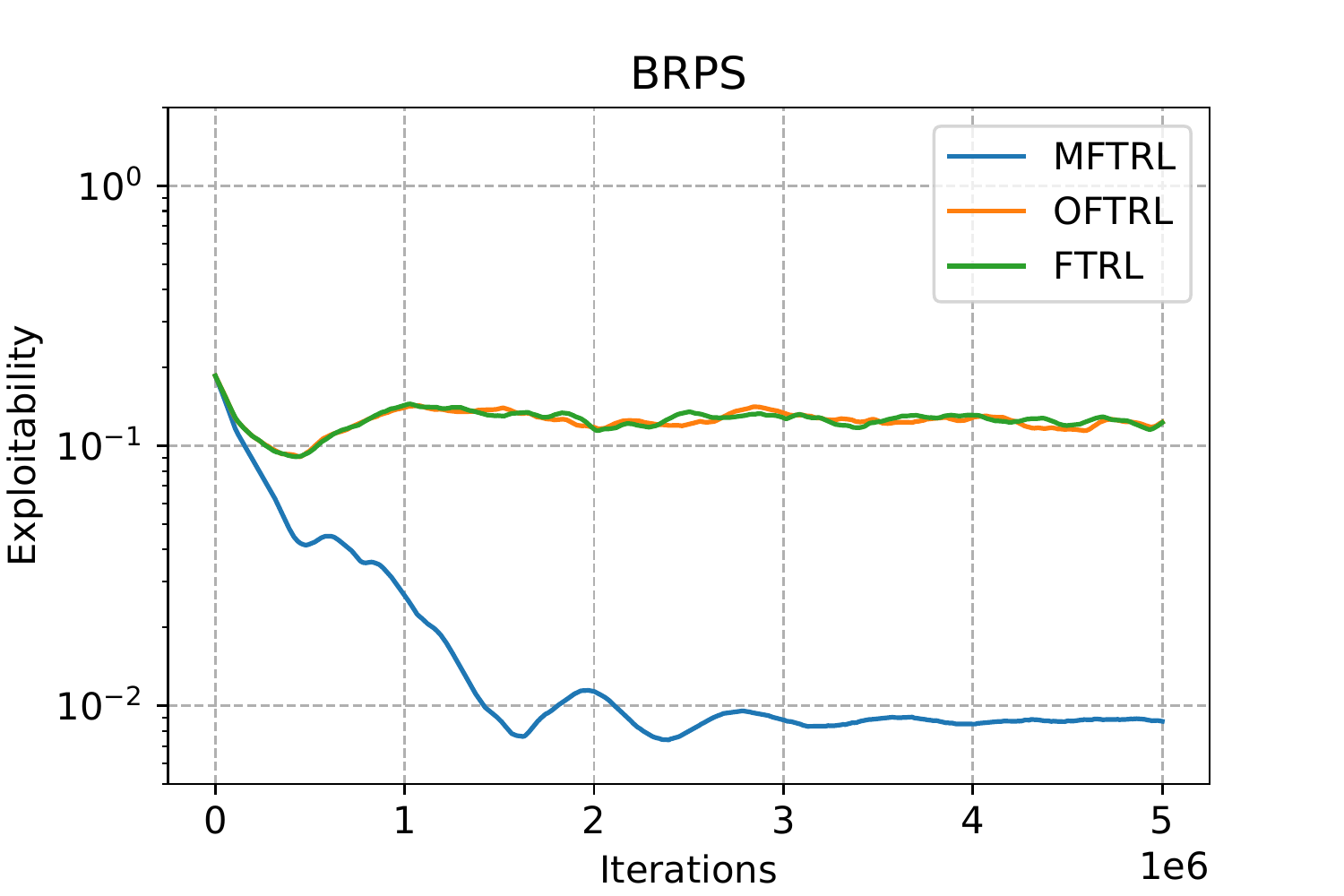}
    \end{minipage}
    \begin{minipage}[t]{0.24\textwidth}
        \centering
        \includegraphics[width=1.1\linewidth]{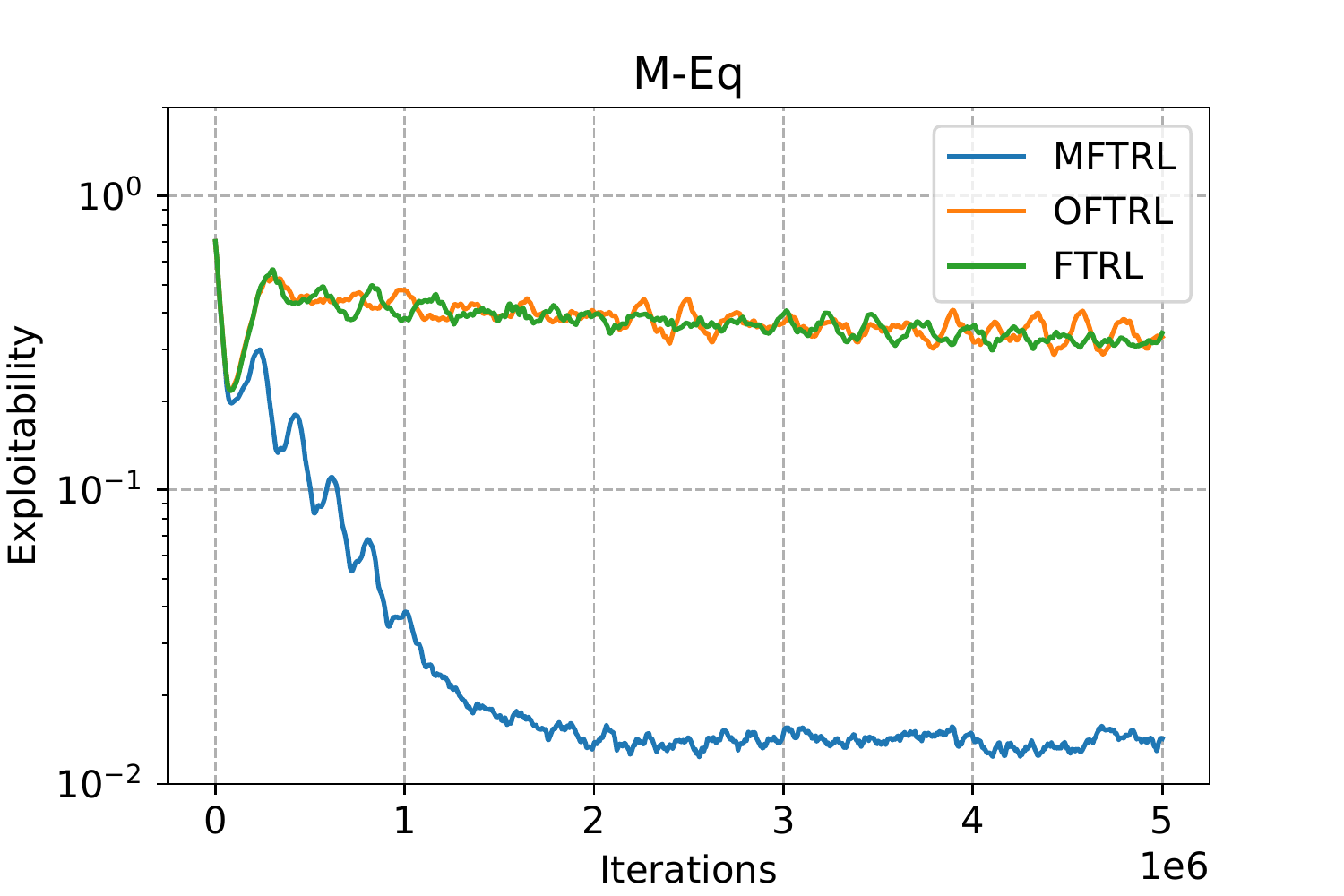}
    \end{minipage}
    \begin{minipage}[t]{0.24\textwidth}
        \centering
        \includegraphics[width=1.1\linewidth]{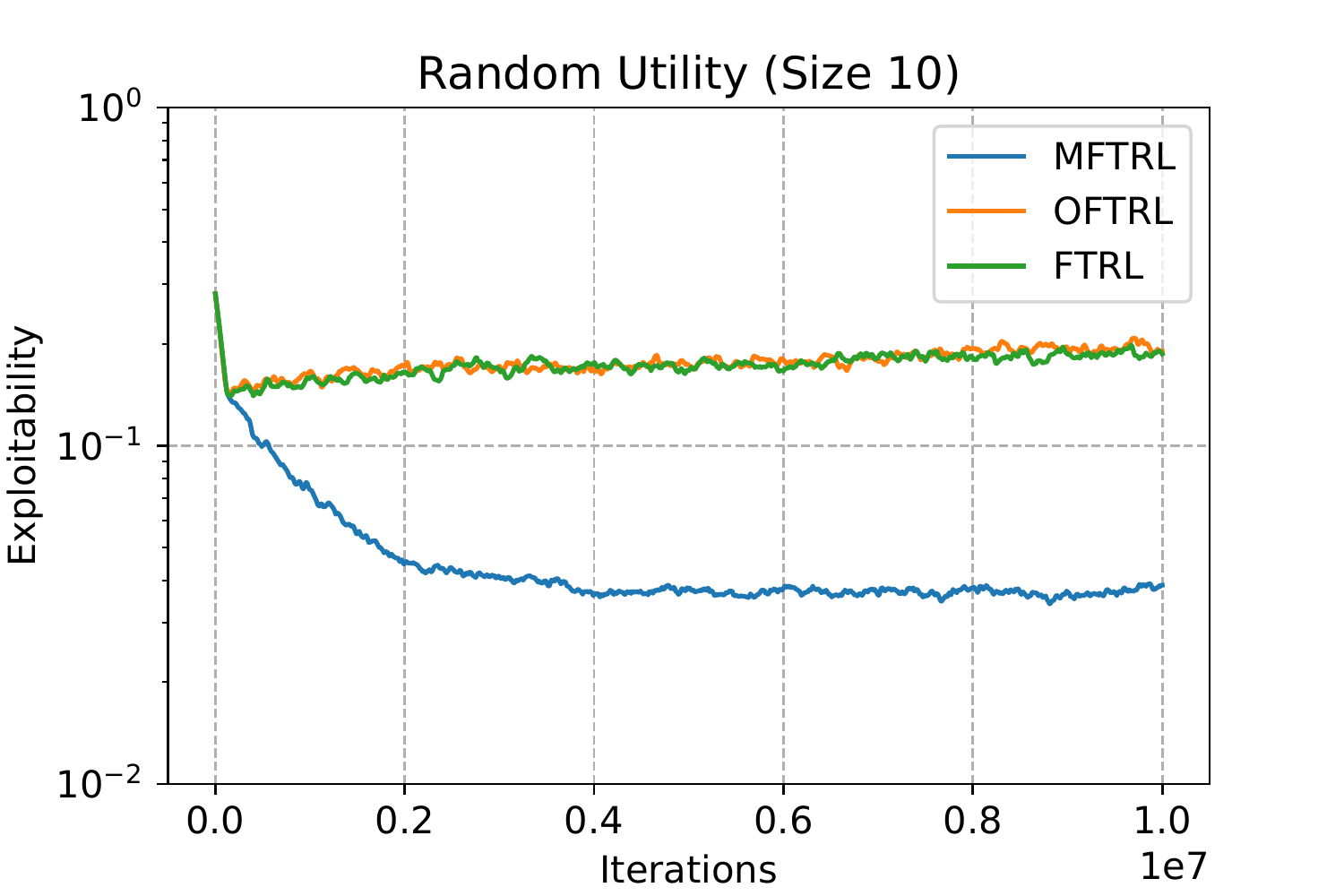}
    \end{minipage}
    \begin{minipage}[t]{0.24\textwidth}
        \centering
        \includegraphics[width=1.1\linewidth]{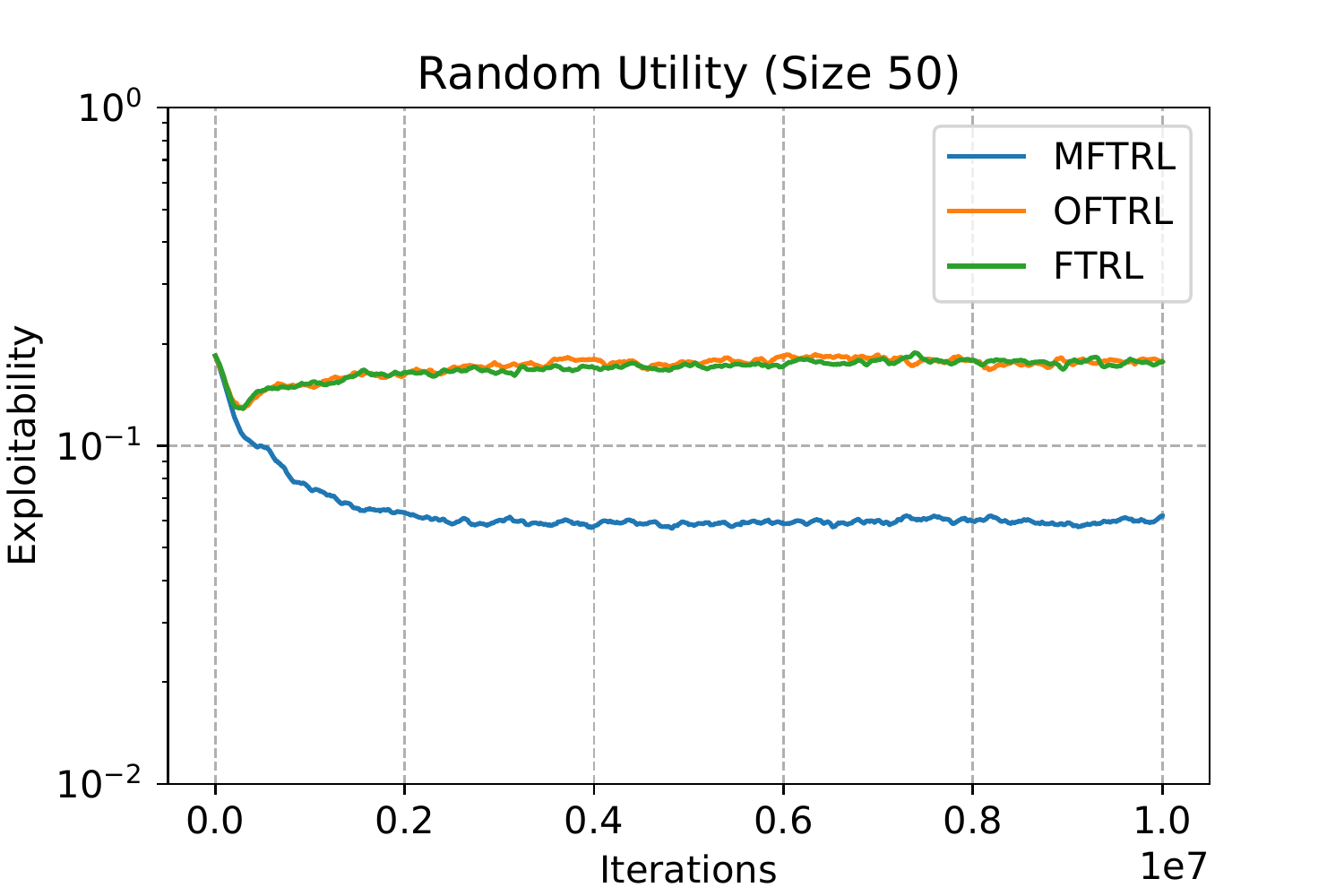}
    \end{minipage}
    \caption{
    Exploitability of $\pi^t$ for M-FTRL, FTRL, and O-FTRL under bandit feedback.
    }
    \label{fig:exploitability_bandit}
\end{figure*}
\begin{figure*}[t!]
    \centering
    \begin{minipage}[t]{0.33\textwidth}
        \centering
        \includegraphics[width=1.0\linewidth]{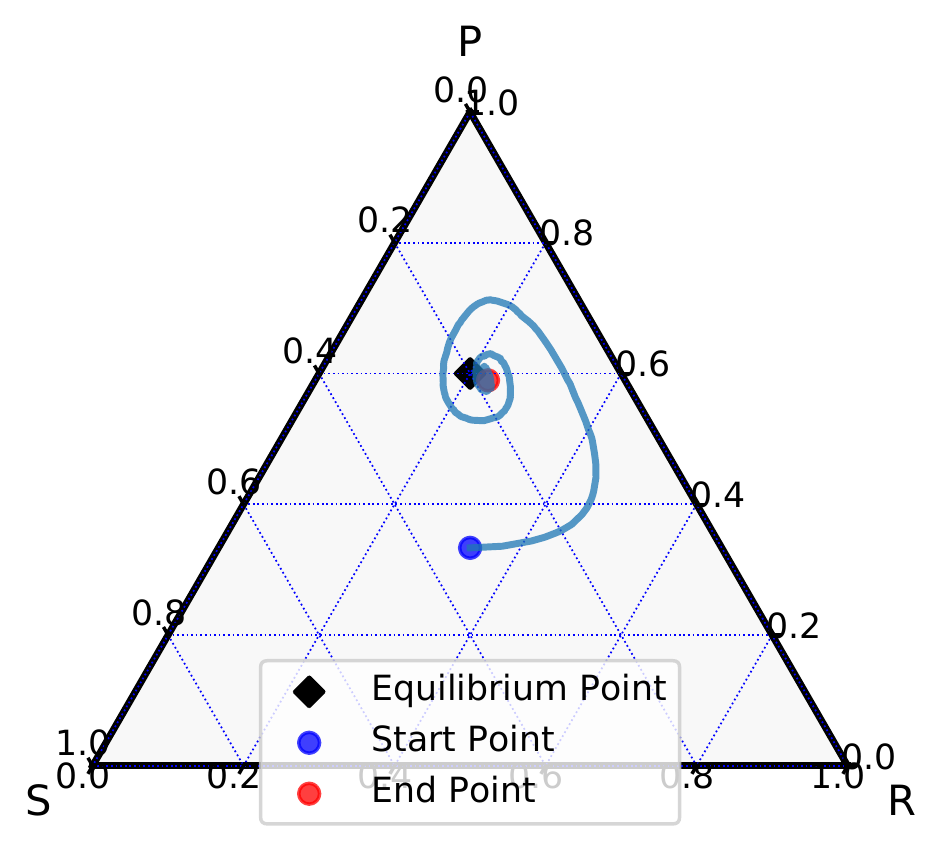}
        \subcaption{M-FTRL with a fixed reference strategy}
    \end{minipage}
    \begin{minipage}[t]{0.33\textwidth}
        \centering
        \includegraphics[width=1.0\linewidth]{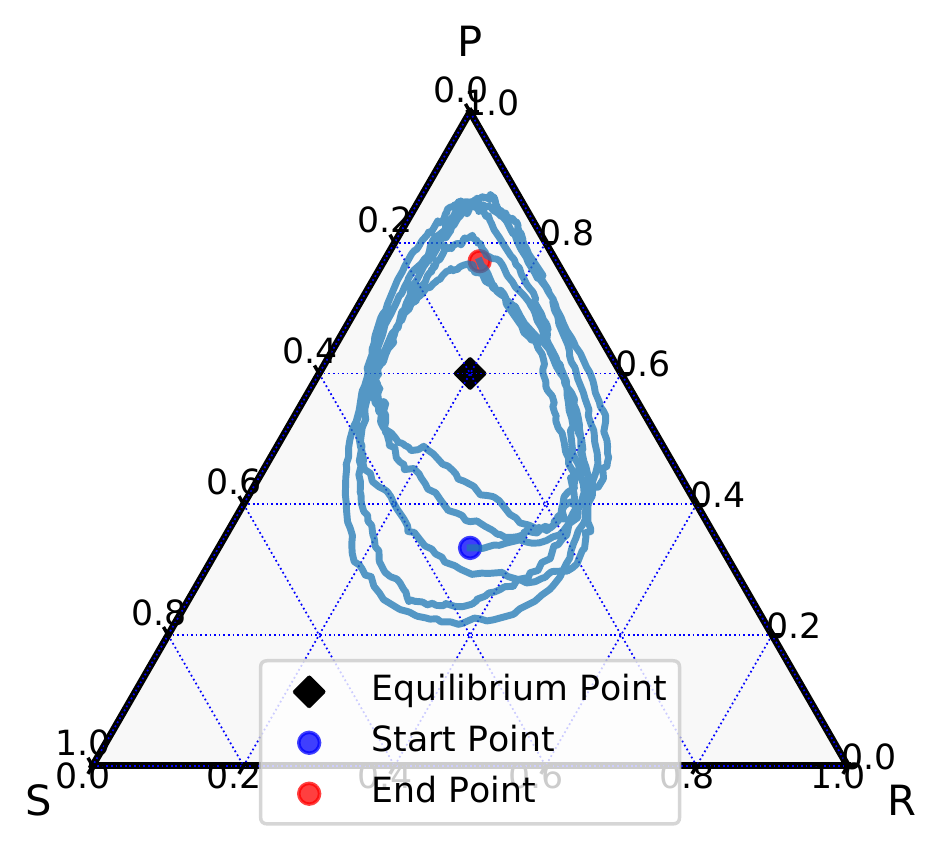}
        \subcaption{FTRL}
    \end{minipage}
    \begin{minipage}[t]{0.33\textwidth}
        \centering
        \includegraphics[width=1.0\linewidth]{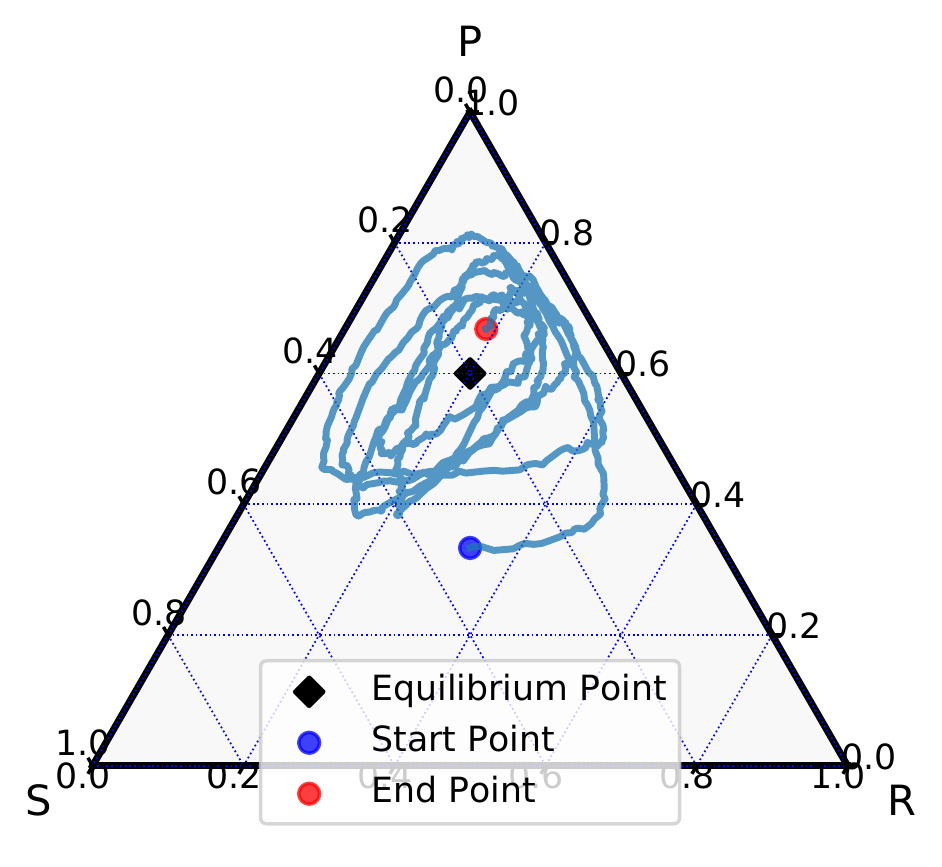}
        \subcaption{O-FTRL}
    \end{minipage}
    \caption{
    Trajectories of $\pi^t$ for M-FTRL, FTRL and O-FTRL in BRPS under bandit feedback.
    We set the initial strategy profile to $\pi_i^0=\frac{1}{|A_i|}$ for $i\in \{1, 2\}$.
    The black point represents the equilibrium strategy.
    The blue/red points represent the initial/final points, respectively.
    }
    \label{fig:trajectory_brps_bandit}
\end{figure*}

\subsection{Proof of Theorem \ref{thm:expoitability_bound}}
\label{sec:proof_thm_exploitability}
From the definition of exploitability, we have:
\begin{align}
\label{eq:exploitability_bound_pi_t}
    &\mathrm{exploit}(\pi^t) = \sum_{i=1}^2 \max_{\tilde{\pi}_i\in \Delta(A_i)} v_i^{\tilde{\pi}_i, \pi_{-i}^t} \nonumber\\
    &= \sum_{i=1}^2 \left(\max_{\tilde{\pi}_i\in \Delta(A_i)} v_i^{\tilde{\pi}_i, \pi_{-i}^{\mu}} \right. \nonumber\\
    &\left. ~~~~~~~~~+ \max_{\tilde{\pi}_i\in \Delta(A_i)} v_i^{\tilde{\pi}_i, \pi_{-i}^t}- \max_{\tilde{\pi}_i\in \Delta(A_i)} v_i^{\tilde{\pi}_i, \pi_{-i}^{\mu}}\right) \nonumber\\
    &\leq \sum_{i=1}^2 \!\left(\max_{\tilde{\pi}_i\in \Delta(A_i)} v_i^{\tilde{\pi}_i, \pi_{-i}^{\mu}} \!+\! \max_{\tilde{\pi}_i\in \Delta(A_i)} \left(v_i^{\tilde{\pi}_i, \pi_{-i}^t}- v_i^{\tilde{\pi}_i, \pi_{-i}^{\mu}}\right)\!\right) \nonumber\\
    &\leq \sum_{i=1}^2 \left(\max_{\tilde{\pi}_i\in \Delta(A_i)} v_i^{\tilde{\pi}_i, \pi_{-i}^{\mu}} \right. \nonumber\\
    &\left. ~~~~~~~~~+ \|\pi_i^{\mu} - \pi_i^t\|_1\max_{\tilde{\pi}_{-i}\in \Delta(A_{-i})} \|q_i^{\pi_i^t, \tilde{\pi}_{-i}}\|_{\infty}\right) \nonumber\\
    &\leq \sum_{i=1}^2 \left(\max_{\tilde{\pi}_i\in \Delta(A_i)} v_i^{\tilde{\pi}_i, \pi_{-i}^{\mu}} + u_{\max}\sqrt{2(\ln 2)\mathrm{KL}(\pi_i^{\mu}, \pi_i^t)}\right),
\end{align}
where the second inequality follows from H\"{o}lder's inequality, and the last inequality follows from Lemma 11.6.1 in \citep{thomas2006elements}.

From Lemma 3.5 of \citep{Bauer:2019}, a stationary point $\pi^{\mu}$ of (\ref{eq:rmd}) satisfies that for all $i\in \{1, 2\}$ and $a_i\in A_i$, $q_i^{\pi^{\mu}}(a_i) - v_i^{\pi^{\mu}} \leq \mu$.
Therefore, the term of $\max_{\tilde{\pi}_i\in \Delta(A_i)} v_i^{\tilde{\pi}_i, \pi_{-i}^{\mu}}$ can be bounded as:
\begin{align}
\label{eq:exploitability_bound_pi_mu}
    &\sum_{i=1}^2 \max_{\tilde{\pi}_i\in \Delta(A_i)} v_i^{\tilde{\pi}_i, \pi_{-i}^{\mu}} \!=\! \sum_{i=1}^2 \left(\max_{\tilde{\pi}_i\in \Delta(A_i)} v_i^{\tilde{\pi}_i, \pi_{-i}^{\mu}} - v_i^{\pi^{\mu}}\right) \nonumber\\
    &= \sum_{i=1}^2 \left(\max_{a_i\in A_i} q_i^{\pi^{\mu}}(a_i) - v_i^{\pi^{\mu}}\right) \leq 2\mu,
\end{align}
where the second equality follows from $\sum_{i=1}^2v_i^{\pi^{\mu}}=0$ by the definition of zero-sum games.
By combining (\ref{eq:exploitability_bound_pi_t}), (\ref{eq:exploitability_bound_pi_mu}), and Corollary \ref{cor:KL_bound}, we have:
\begin{align*}
    &\mathrm{exploit}(\pi^t)  \leq 2\mu + u_{\max}\sum_{i=1}^2\sqrt{2(\ln 2) \mathrm{KL}(\pi_i^{\mu}, \pi_i^t)} \\
    &\leq 2\mu + u_{\max}\sqrt{2(\ln 2) }\sqrt{2\sum_{i=1}^2\mathrm{KL}(\pi_i^{\mu}, \pi_i^t)} \\
    &\leq 2\mu + 2\sqrt{\ln 2}u_{\max}\sqrt{\mathrm{KL}(\pi^{\mu}, \pi^0)\exp\left(-\mu \xi t\right)} \\
    &= 2\mu + 2u_{\max}\sqrt{(\ln 2)\mathrm{KL}(\pi^{\mu}, \pi^0)}\exp\left(-\frac{\mu \xi}{2} t\right),
\end{align*}
where the second inequality follows from $\sqrt{a} + \sqrt{b}\leq \sqrt{2(a+b)}$ for $a,b>0$.
This concludes the statement.
\qed

\section{Experiments}
In this section, we empirically evaluate M-FTRL.
We compare its performance to those of FTRL and O-FTRL.

We conduct experiments on the following games: biased rock-paper-scissors (BRPS), a normal-form game with multiple Nash equilibria (M-Eq), and random utility games.
BRPS and M-Eq have the following utility matrix, respectively:
\begin{table}[h!]
    \centering
    \begin{minipage}[t]{0.23\textwidth}
    \centering
    \caption{Biased RPS utilities}
    \label{tab:biased-rps}
    \begin{tabular}{cccc}
    \hline
      & R  & P  & S  \\ \hline
    R & $0$  & $-0.1$  & $0.3$ \\
    P & $0.1$ & $0$  & $-0.1$ \\
    S & $-0.3$  & $0.1$ & $0$  \\ \hline
    \end{tabular}
    \end{minipage}
    \begin{minipage}[t]{0.23\textwidth}
    \centering
    \caption{M-Eq utilities}
    \label{tab:m-eq}
    \begin{tabular}{cccc}
    \hline
      & $y_1$  & $y_2$  \\ \hline
    $x_1$ & $0.1$ & $-0.2$ \\
    $x_2$ & $-0.4$ & $0.3$ \\
    $x_3$ & $-1$  & $0.9$  \\ \hline
    \end{tabular}
    \end{minipage}
\end{table}

The set of Nash equilibria in M-Eq is given by:
\begin{align*}
    &\Pi^{\ast}_1 \!=\! \left\{ x\in \Delta^3 |~   x_2 = -\frac{22}{12}x_1 + \frac{19}{12}; ~x_3 = \frac{10}{12}x_1 - \frac{7}{12}   \right\}, \\
    &\Pi^{\ast}_2 = \left\{\left(\frac{1}{2}, \frac{1}{2}\right) \right\}.
\end{align*}
For random utility games, we generate each component in a utility matrix uniformly at random in $[0, 1]$.
We consider random utility games with action sizes $|A_1|=|A_2|=10$ and $|A_1|=|A_2|=50$.
For each game, we average the results for $100$ instances.
We generate the initial strategy profile $\pi^0$ uniformly at random in $\prod_{i=1}^2\Delta^{\circ}(A_i)$ for each instance.
We use the entropy regularizer $\psi_i(p)=\sum_{a_i\in A_i}p(a_i)\ln p(a_i)$ in all experiments.

\subsection{Full-Information Feedback}
First, we provide the results under full-information feedback.
In these experiments, we analyze the performance of M-FTRL with a fixed reference strategy $c_i=\left(\frac{1}{|A_i|}\right)_{a_i\in A_i}$ and one with adaptive reference strategies (Algorithm \ref{alg:m-ftrl}).
We set the learning rate to $\eta=10^{-1}$ for all algorithms, and set the mutation parameter to $\mu=10^{-2}$ for M-FTRL.
For M-FTRL with adaptive reference strategies, we set $N=4,000$ in BRPS and M-Eq, and $N=20,000$ in the random utility games.

Figure \ref{fig:exploitability_full} shows the average exploitability of $\pi^t$ updated by each algorithm.
We find that the exploitability of M-FTRL converges to a constant value faster than FTRL and O-FTRL.
Furthermore, by adapting the reference strategy, the exploitability of M-FTRL's strategy profile quickly converges to $0$.
We provide additional experimental results with varying mutation parameters $\mu\in \{10^{-3}, 5\times 10^{-3}, 10^{-2}, 10^{-1}, 1\}$ in Appendix \ref{sec:sensitivity_analysis_mu}.

Next, we compare the trajectories of strategies updated by each algorithm.
Figure \ref{fig:trajectory_brps_full} shows the trajectories of $\pi^t$ updated by each algorithm from an instance of RBPS.
Note that in this figure, we set the initial strategy to $\pi_i^0=\frac{1}{|A_i|}$ for $i\in \{1, 2\}$.
We can observe that FTRL's strategies cycle around the Nash equilibrium strategy, and O-FTRL's strategies gradually approach the Nash equilibrium strategy.
Unlike these methods, M-FTRL's strategies quickly approach the stationary point.
Figure \ref{fig:start_end_points_multiple_full} shows the initial strategies and final strategies for player $1$ in M-Eq.
We find that M-FTRL's strategy profile converges to a unique stationary point regardless of the setting of the initial point, while O-FTRL's strategy profile converges to a different Nash equilibrium for each instance.
This result highlights the uniqueness property of the stationary point from Theorem \ref{thm:bregman_div}.

\subsection{Bandit Feedback}
Next, we provide the results under bandit feedback.
We set the learning rate to $\eta=10^{-4}$ for all algorithms, and set the mutation parameter to $\mu=10^{-2}$ for M-FTRL.
In the bandit feedback experiments, we focus on the performance of M-FTRL with a fixed reference strategy $c_i=\frac{1}{|A_i|}$.
In FTRL and O-FTRL algorithms, we use the unbiased estimator by \citep{lattimore2020bandit} as the estimator of $q_i^{\pi^t}$ so that the estimator takes values in $(-\infty, u_{\max}]$ for computational stability.
We provide further details on the estimator in Appendix \ref{sec:appendix_lattimore_estimator}.
Note that M-FTRL does not need this estimator, but it is sufficient to use the importance-weighted estimator in (\ref{eq:estimator}).

Figure \ref{fig:exploitability_bandit} shows the average exploitability of $\pi^t$ updated by each algorithm, and Figure \ref{fig:trajectory_brps_bandit} shows the trajectories of $\pi^t$ updated by each algorithm from an instance of RBPS.
We can see that unlike the experimental results under full-information feedback, O-FTRL's trajectory does not converge to a Nash equilibrium.
On the other hand, M-FTRL's trajectory converges near a stationary point.
These results suggest that M-FTRL has the last-iterate convergence property even under bandit feedback.

\section{Conclusion}
In this study, we proposed M-FTRL, a simple FTRL algorithm that incorporates mutation
for last-iterate convergence to a stationary point.
We proved that the M-FTRL dynamics induced by the entropy regularizer is equivalent to RMD.
Besides, we showed that the trajectory of M-FTRL with general regularization functions converges to a stationary point of the RMD. 
The numerical simulation reveals that M-FTRL outperforms the state-of-the-art FTRL and O-FTRL in a variety of two-player zero-sum games.
In future studies, we will extend M-FTRL algorithm and provide a theoretical analysis to more complex games, such as extensive-form games and Markov games. 

\bibliography{ref}

\onecolumn
\appendix
\section{Unbiased Estimator for FTRL and O-FTRL under Bandit Feedback}
\label{sec:appendix_lattimore_estimator}
For FTRL and O-FTRL under bandit feedback, we use the following unbiased estimator of $q_i^{\pi^t}$ which is proposed by \citep{lattimore2020bandit}:
\begin{align*}
    \hat{q}_i^{\pi^t}(a_i) = u_{\max} - \frac{u_{\max} - u_i(a_1^t, a_2^t)}{\pi_i^t(a_i^t)}\mathds{1}[a_i = a_i^t].
\end{align*}
This estimator takes values in $(-\infty, u_{\max}]$ while the standard importance-weighted estimator takes values in $(-\infty, \infty)$.

\section{Sensitivity Analysis on Mutation Parameters}
\label{sec:sensitivity_analysis_mu}
In this section, we investigate the performance of M-FTRL with a fixed reference strategy with varying $\mu\in \{10^{-3}, 5\times 10^{-3}, 10^{-2}, 10^{-1}, 1\}$.
We set the reference strategy to $c_i=\left(\frac{1}{|A_i|}\right)_{a_i\in A_i}$, and set the learning rate to $\eta=10^{-1}$.
The initial strategy profile $\pi^0$ is generated uniformly at random in $\prod_{i=1}^2\Delta^{\circ}(A_i)$ for each instance.
We conduct experiments on BRPS under full-information feedback.
Figure \ref{fig:compare_mu} shows the average exploitability of $\pi^t$ for $100$ instances.
This result highlights the trade-off between the convergence rate and exploitability as shown in Theorem \ref{thm:expoitability_bound}.

\begin{figure}[h!]
    \centering
    \includegraphics[width=0.5\linewidth]{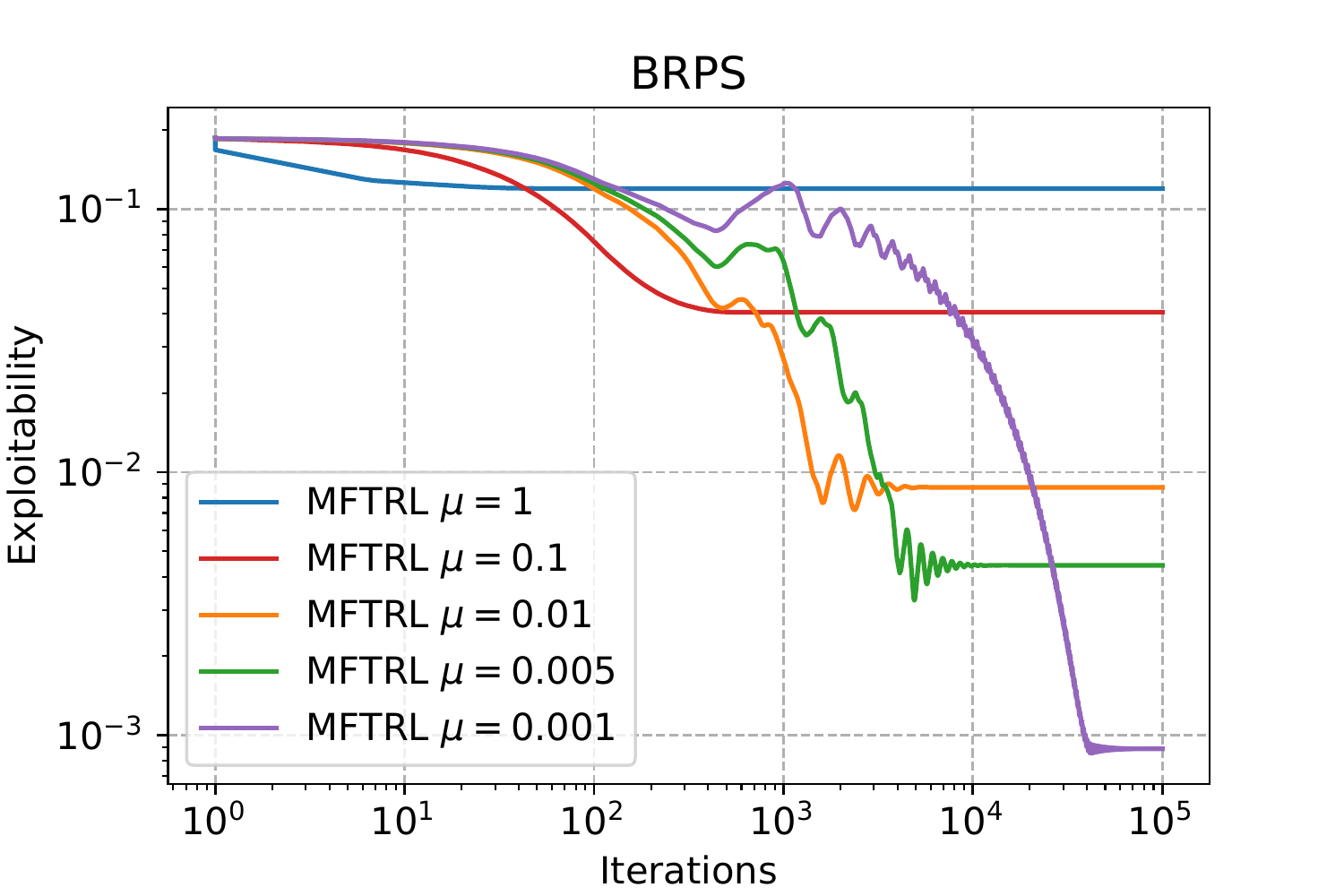}
    \caption{
    Exploitability of $\pi^t$ for M-FTRL with a fixed reference strategy in BRPS under full-information feedback.
    }
    \label{fig:compare_mu}
\end{figure}

\section{Additional Lemmas}
\begin{lemma}
\label{lem:convex_conjugate}
For any $\pi\in \prod_{i=1}^2\Delta(A_i)$, $\pi^t$ updated by M-FTRL satisfies that:
\begin{align*}
    D_{\psi}(\pi, \pi^t) = \sum_{i=1}^2\left(\max_{p\in \Delta(A_i)}\left\{\left\langle z_i^t, p\right\rangle - \psi_i(p)\right\}-\langle z_i^t, \pi_i\rangle + \psi_i(\pi_i)\right).
\end{align*}
\end{lemma}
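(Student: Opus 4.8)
The plan is to read the right-hand side as the sum over players of a Fenchel-type gap: for each $i$, the quantity $\max_{p\in\Delta(A_i)}\{\langle z_i^t,p\rangle-\psi_i(p)\}$ is the conjugate of $\psi_i$ (restricted to the simplex) evaluated at $z_i^t$, and the remaining terms $-\langle z_i^t,\pi_i\rangle+\psi_i(\pi_i)$ complete it to the Fenchel--Young residual at $\pi_i$. Since both sides are separable across $i\in\{1,2\}$ and $D_\psi(\pi,\pi^t)=\sum_{i=1}^2 D_{\psi_i}(\pi_i,\pi_i^t)$, it suffices to prove the single-player identity and then add the two equations.

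For a fixed $i$, I would first use that $\pi_i^t$ is \emph{by definition} a maximizer of $p\mapsto\langle z_i^t,p\rangle-\psi_i(p)$ over $\Delta(A_i)$, so $\max_{p\in\Delta(A_i)}\{\langle z_i^t,p\rangle-\psi_i(p)\}=\langle z_i^t,\pi_i^t\rangle-\psi_i(\pi_i^t)$. Substituting this into the right-hand side and expanding the Bregman divergence $D_{\psi_i}(\pi_i,\pi_i^t)=\psi_i(\pi_i)-\psi_i(\pi_i^t)-\langle\nabla\psi_i(\pi_i^t),\pi_i-\pi_i^t\rangle$, the $\psi_i(\pi_i)$ and $\psi_i(\pi_i^t)$ terms cancel and the claimed identity reduces to $\langle z_i^t-\nabla\psi_i(\pi_i^t),\,\pi_i-\pi_i^t\rangle=0$. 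To get this I would invoke the first-order optimality condition for the concave problem defining $\pi_i^t$: since the regularizer keeps $\pi_i^t$ in the relative interior $\Delta^\circ(A_i)$ (for the entropy regularizer this is immediate because $\nabla\psi_i$ blows up at the boundary), the gradient $z_i^t-\nabla\psi_i(\pi_i^t)$ is orthogonal to the tangent space of the simplex, i.e.\ $z_i^t-\nabla\psi_i(\pi_i^t)=\lambda_i\mathbf 1$ for some scalar $\lambda_i$. As $\pi_i,\pi_i^t\in\Delta(A_i)$ both sum to $1$, we get $\langle\mathbf 1,\pi_i-\pi_i^t\rangle=0$, hence $\langle z_i^t-\nabla\psi_i(\pi_i^t),\pi_i-\pi_i^t\rangle=\lambda_i\langle\mathbf 1,\pi_i-\pi_i^t\rangle=0$, which proves the single-player identity; summing over $i=1,2$ gives the lemma.

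The only delicate point I anticipate is justifying that the optimality condition can be written as an equality (constancy of $z_i^t-\nabla\psi_i(\pi_i^t)$ across coordinates) rather than merely as a variational inequality. For the entropy regularizer used throughout the convergence theorems this follows from interiority of $\pi_i^t$; more generally one can avoid the explicit Lagrange multiplier and argue directly from the optimality inequality $\langle z_i^t-\nabla\psi_i(\pi_i^t),\,p-\pi_i^t\rangle\le 0$ for all $p\in\Delta(A_i)$: by interiority, both $\pi_i^t+\varepsilon(\pi_i-\pi_i^t)$ and $\pi_i^t-\varepsilon(\pi_i-\pi_i^t)$ are feasible for small $\varepsilon>0$, so applying the inequality in these two opposite directions forces $\langle z_i^t-\nabla\psi_i(\pi_i^t),\pi_i-\pi_i^t\rangle=0$, which is exactly what is needed.
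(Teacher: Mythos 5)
Your proposal is correct and follows essentially the same route as the paper's proof: expand the Bregman divergence, replace the max by $\langle z_i^t,\pi_i^t\rangle-\psi_i(\pi_i^t)$, and use the first-order optimality condition $z_i^t-\nabla\psi_i(\pi_i^t)=\lambda\mathbf{1}$ together with $\langle\mathbf{1},\pi_i-\pi_i^t\rangle=0$ to swap $z_i^t$ for $\nabla\psi_i(\pi_i^t)$ in the inner product. Your extra care about justifying the multiplier form (interiority of $\pi_i^t$, or the two-sided variational-inequality argument) is a sound addition that the paper leaves implicit.
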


\begin{lemma}
\label{lem:stationary_point_rmd}
Let $\pi^{\mu}\in \prod_{i=1}^2\Delta(A_i)$ be a stationary point of (\ref{eq:rmd}).
For a player $i\in \{1, 2\}$, if $c_i\in \Delta^{\circ}(A_i)$ and $\mu>0$, then we also have $\pi_i^{\mu}\in \Delta^{\circ}(A_i)$.
\end{lemma}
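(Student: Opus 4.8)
The plan is to argue by contradiction, supposing that some coordinate of $\pi_i^{\mu}$ vanishes at a stationary point. Fix a player $i\in\{1,2\}$ with $c_i\in\Delta^{\circ}(A_i)$ and $\mu>0$, and suppose there is an action $a_i\in A_i$ with $\pi_i^{\mu}(a_i)=0$. Recall that being a stationary point of \eqref{eq:rmd} means precisely that, for every player $j$ and every action $a_j$,
\begin{align*}
    \pi_j^{\mu}(a_j)\bigl(q_j^{\pi^{\mu}}(a_j)-v_j^{\pi^{\mu}}\bigr)+\mu\bigl(c_j(a_j)-\pi_j^{\mu}(a_j)\bigr)=0.
\end{align*}
Specializing this identity to the coordinate $(i,a_i)$ and using $\pi_i^{\mu}(a_i)=0$, the first term disappears and we are left with $\mu\bigl(c_i(a_i)-0\bigr)=0$, i.e. $\mu\, c_i(a_i)=0$. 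Since $\mu>0$ and $c_i\in\Delta^{\circ}(A_i)$ forces $c_i(a_i)>0$, this is a contradiction. Hence no coordinate of $\pi_i^{\mu}$ can be zero, so $\pi_i^{\mu}\in\Delta^{\circ}(A_i)$.

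The only subtlety worth spelling out is the interpretation of the term $\frac{\mu}{\pi_i^t(a_i)}\bigl(c_i(a_i)-\pi_i^t(a_i)\bigr)$ appearing in the definition of the M-FTRL dynamics \eqref{eq:m-ftrl}: the stationarity condition I use is the one for \eqref{eq:rmd} in the form displayed above (obtained by multiplying through by $\pi_i^t(a_i)$), which is well defined even when a coordinate is zero, so the argument does not require dividing by $0$. I would also remark that existence of such a stationary point $\pi^{\mu}$ with all coordinates positive is guaranteed by Lemma~3.3 of \citep{Bauer:2019}, so the statement is not vacuous; the content of this lemma is the complementary fact that \emph{every} stationary point already lies in the interior, which is what makes quantities such as $\xi=\min_{i,a_i}\frac{c_i(a_i)}{\pi_i^{\mu}(a_i)}$ in Theorem~\ref{thm:bregman_div} and Corollary~\ref{cor:KL_bound} finite. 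There is essentially no real obstacle here — the proof is a one-line evaluation of the stationarity equation at a hypothetical zero coordinate — so the main care is simply to state the stationarity condition in the cleared-denominator form before substituting.
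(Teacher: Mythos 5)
Your argument is correct and is essentially identical to the paper's proof: both evaluate the (cleared-denominator) stationarity equation of \eqref{eq:rmd} at a hypothetical zero coordinate and derive the contradiction $\mu\,c_i(a_i)=0$ (equivalently, $\frac{d}{dt}\pi_i^{\mu}(a_i)=\mu c_i(a_i)>0$) from $\mu>0$ and $c_i\in\Delta^{\circ}(A_i)$. No issues.
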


\section{Proofs}
\subsection{Proof of Theorem \ref{thm:rmd}}
\label{sec:appendix_proof_thm_rmd}
\begin{proof}[Proof of Theorem \ref{thm:rmd}]
By the method of Lagrange multiplier, we have:
\begin{align*}
    \pi_i^t(a_i) = \frac{\exp\left(z_i^t(a_i)\right)}{\sum_{a_i'\in A_i}\exp\left(z_i^t(a_i')\right)}.
\end{align*}
Therefore, the time derivative of $\pi_i^t(a_i)$ is given as follows:
\begin{align*}
    \frac{d}{dt}\pi_i^t(a_i) &= \frac{\frac{d}{dt}\exp\left(z_i^t(a_i)\right)}{\sum_{a_i'\in A_i}\exp\left(z_i^t(a_i')\right)} - \frac{\exp\left(z_i^t(a_i)\right)\frac{d}{dt}\left(\sum_{a_i'\in A_i}\exp\left(z_i^t(a_i')\right)\right)}{\left(\sum_{a_i'\in A_i}\exp\left(z_i^t(a_i')\right)\right)^2} \\
    &= \frac{\exp\left(z_i^t(a_i)\right)\frac{d}{dt}z_i^t(a_i)}{\sum_{a_i'\in A_i}\exp\left(z_i^t(a_i')\right)} - \frac{\exp\left(z_i^t(a_i)\right)\left(\sum_{a_i'\in A_i}\exp\left(z_i^t(a_i')\right)\frac{d}{dt}z_i^t(a_i')\right)}{\left(\sum_{a_i'\in A_i}\exp\left(z_i^t(a_i')\right)\right)^2} \\
    &= \pi_i^t(a_i)\frac{d}{dt}z_i^t(a_i) - \pi_i^t(a_i)\sum_{a_i'\in A_i}\pi_i^t(a_i')\frac{d}{dt}z_i^t(a_i').
\end{align*}
From the definition of $z_i^t(a_i)$, we have:
\begin{align*}
    \frac{d}{dt}z_i^t(a_i)=q^{\pi^t}_i(a_i)+\frac{\mu}{\pi_i^t(a_i)}\left(c_i(a_i)-\pi_i^t(a_i)\right).
\end{align*}
By combining these equalities, we get:
\begin{align*}
    \frac{d}{dt}\pi_i^t(a_i) =& \pi_i^t(a_i)\left(q_i^{\pi^t}(a_i) +  \frac{\mu}{\pi_i^t(a_i)}\left(c_i(a_i)-\pi_i^t(a_i)\right) - \sum_{a_i'\in A_i}\pi_i^t(a_i')\left(q_i^{\pi^t}(a_i') + \frac{\mu}{\pi_i^t(a_i')}\left(c_i(a_i')-\pi_i^t(a_i')\right)\right)\right) \\
    =& \pi_i^t(a)\left(q_i^{\pi^t}(a_i) - v_i^{\pi^t}\right) + \mu\left(c_i(a_i)-\pi_i^t(a_i)\right) - \mu\pi_i^t(a_i)\sum_{a_i'\in A_i}\left(c_i(a_i')-\pi_i^t(a_i')\right) \\
    =& \pi_i^t(a)\left(q_i^{\pi^t}(a_i) - v_i^{\pi^t}\right) + \mu\left(c_i(a_i)-\pi_i^t(a_i)\right).
\end{align*}
\end{proof}

\subsection{Proof of Lemma \ref{lem:bregman_div}}
\label{sec:appendix_proof_lem_bregman_div}
\begin{proof}[Proof of Lemma \ref{lem:bregman_div}]
Let us define $\psi_i^{\ast}(z_i)=\max_{p\in \Delta(A_i)}\left\{\left\langle z_i, p\right\rangle - \psi_i(p)\right\}$.
Then, from Lemma \ref{lem:convex_conjugate}, the time derivative of $D_{\psi}(\pi, \pi^t)$ is given as:
\begin{align*}
    \frac{d}{dt}D_{\psi}(\pi, \pi^t) &= \sum_{i=1}^2\frac{d}{dt}\left(\max_{p\in \Delta(A_i)}\left\{\left\langle z_i^t, p\right\rangle - \psi_i(p)\right\}-\langle z_i^t, \pi_i\rangle + \psi_i(\pi_i)\right) \\
    &= \sum_{i=1}^2\frac{d}{dt}\left(\psi_i^{\ast}(z_i^t)-\langle z_i^t, \pi_i\rangle\right) \\
    &= \sum_{i=1}^2\left(\left\langle \frac{d}{dt}z_i^t, \nabla\psi_i^{\ast}(z_i^t)\right\rangle - \left\langle \frac{d}{dt}z_i^t, \pi_i\right\rangle\right) \\
    &= \sum_{i=1}^2\left\langle \frac{d}{dt}z_i^t, \nabla\psi_i^{\ast}(z_i^t) - \pi_i\right\rangle.
\end{align*}
From the maximizing argument of \citep{shalev2011online}, we have $\nabla \psi_i^{\ast}(z_i)=\argmax_{p\in \Delta(A_i)}\left\{\left\langle z_i, p\right\rangle - \psi_i(p)\right\}$ and then $\nabla \psi_i^{\ast}(z_i^t)=\pi_i^t$.
Furthermore, from the definition of $z_i^t(a_i)$, we have $\frac{d}{dt}z_i^t(a_i)=q^{\pi^t}_i(a_i)+\frac{\mu}{\pi_i^t(a_i)}\left(c_i(a_i)-\pi_i^t(a_i)\right)$.
Then,
\begin{align*}
    \frac{d}{dt}D_{\psi}(\pi, \pi^t) &= \sum_{i=1}^2\left\langle \frac{d}{dt}z_i^t, \pi_i^t - \pi_i\right\rangle \\
    &= \sum_{i=1}^2\sum_{a_i\in A_i}\left(q_i^{\pi^t}(a_i) + \frac{\mu}{\pi_i^t(a_i)}\left(c_i(a_i)-\pi_i^t(a_i)\right)\right) \left(\pi_i^t(a_i) - \pi_i(a_i)\right) \\
    &= \sum_{i=1}^2\sum_{a_i\in A_i}\left(\pi_i^t(a_i) - \pi_i(a_i)\right)\left(q_i^{\pi^t}(a_i) + \mu\left(\frac{c_i(a_i)}{\pi_i^t(a_i)}-1\right)\right)  \\
    &= \sum_{i=1}^2\sum_{a_i\in A_i}\left(\pi_i^t(a_i) - \pi_i(a_i)\right)\left(q_i^{\pi^t}(a_i) + \mu\frac{c_i(a_i)}{\pi_i^t(a_i)}\right)  \\
    &= \sum_{i=1}^2\left(v_i^{\pi^t} - v_i^{\pi_i, \pi_{-i}^t} + \mu\sum_{a_i\in A_i}\left(\pi_i^t(a_i) - \pi_i(a_i)\right) \frac{c_i(a_i)}{\pi_i^t(a_i)}\right) \nonumber\\
    &= - \sum_{i=1}^2v_i^{\pi_i, \pi_{-i}^t} + 2\mu - \mu\sum_{i=1}^2\sum_{a_i\in A_i}c_i(a_i)\frac{\pi_i(a_i)}{\pi_i^t(a_i)} \nonumber\\
    &= \sum_{i=1}^2v_i^{\pi_i^t, \pi_{-i}} + 2\mu - \mu\sum_{i=1}^2\sum_{a_i\in A_i}c_i(a_i)\frac{\pi_i(a_i)}{\pi_i^t(a_i)},
\end{align*}
where the sixth equality follows from $\sum_{i=1}^2v_i^{\pi^t}=0$ and $\mu\sum_{a\in A}\pi_i^t(a_i) \frac{c_i(a_i)}{\pi_i^t(a_i)}=\mu\sum_{a\in A}c_i(a_i)=\mu$, and the last equality follows from $v_1^{\pi_1,\pi_2^t}=-v_2^{\pi_1,\pi_2^t}$ and $v_2^{\pi_1^t,\pi_2}=-v_1^{\pi_1^t,\pi_2}$ by the definition of two-player zero-sum games.
\end{proof}

\subsection{Proof of Lemma \ref{lem:rmd_property}}
\label{sec:appendix_proof_rmd_property}
\begin{proof}[Proof of Lemma \ref{lem:rmd_property}]
By using the ordinary differential equation (\ref{eq:rmd}), we have for all $i\in \{1, 2\}$ and $a_i\in A_i$:
\begin{align*}
    &\pi_i^{\mu}(a_i)\left(q_i^{\pi^{\mu}}(a_i) - v_i^{\pi^{\mu}}\right) + \mu\left(c_i(a_i)-\pi_i^{\mu}(a_i)\right)  = 0.
\end{align*}
Then, we get:
\begin{align*}
    q_i^{\pi^{\mu}}(a_i) = v_i^{\pi^{\mu}} - \frac{\mu}{\pi_i^{\mu}(a_i)}\left(c_i(a_i)-\pi_i^{\mu}(a_i)\right).
\end{align*}
Note that from Lemma \ref{lem:stationary_point_rmd}, $\frac{1}{\pi_i^{\mu}(a_i)}$ is well-defined.
Then, for any $\pi_i'\in \Delta(A_i)$ we have:
\begin{align*}
    v_i^{\pi_i',\pi_{-i}^{\mu}} &= \sum_{a_i\in A_i}\pi_i'(a_i)q_i^{\pi^{\mu}}(a_i)\\
    &= v_i^{\pi^{\mu}} - \mu\sum_{a_i\in A_i}\frac{\pi_i'(a_i)}{\pi_i^{\mu}(a_i)}\left(c_i(a_i)-\pi_i^{\mu}(a_i)\right) \\
    &= v_i^{\pi^{\mu}} + \mu - \mu\sum_{a_i\in A_i}c_i(a_i)\frac{\pi_i'(a_i)}{\pi_i^{\mu}(a_i)}.
\end{align*}
\end{proof}

\subsection{Proof of Theorem \ref{thm:bregman_div}}
\label{sec:appendix_proof_thm_bregman_div}
\begin{proof}[Proof of Theorem \ref{thm:bregman_div}]
First, we prove the first part of the theorem.
By setting $\pi=\pi^{\mu}$ in Lemma \ref{lem:bregman_div} and $\pi'=\pi^t$ in Lemma \ref{lem:rmd_property}, we have:
\begin{align*}
    \frac{d}{dt}D_{\psi}(\pi^{\mu}, \pi^t) =&  \sum_{i=1}^2 v_i^{\pi_i^t, \pi_{-i}^{\mu}} + 2\mu - \mu\sum_{i=1}^2\sum_{a_i\in A_i}c_i(a_i)\frac{\pi_i^{\mu}(a_i)}{\pi_i^t(a_i)} \\
    =& \sum_{i=1}^2v_i^{\pi^{\mu}} + 4\mu - \mu\sum_{i=1}^2\sum_{a_i\in A_i}c_i(a_i)\left(\frac{\pi_i^t(a_i)}{\pi_i^{\mu}(a_i)}+\frac{\pi_i^{\mu}(a_i)}{\pi_i^t(a_i)}\right) \\
    =& 4\mu - \mu\sum_{i=1}^2\sum_{a_i\in A_i}c_i(a_i)\left(\frac{\pi_i^t(a_i)}{\pi_i^{\mu}(a_i)}+\frac{\pi_i^{\mu}(a_i)}{\pi_i^t(a_i)}\right) \\
    =& 4\mu - \mu\sum_{i=1}^2\sum_{a_i\in A_i}c_i(a_i)\left(\left(\sqrt{\frac{\pi_i^t(a_i)}{\pi_i^{\mu}(a_i)}}-\sqrt{\frac{\pi_i^{\mu}(a_i)}{\pi_i^t(a_i)}}\right)^2 + 2\right) \\
    =& - \mu\sum_{i=1}^2\sum_{a_i\in A_i}c_i(a_i)\left(\sqrt{\frac{\pi_i^t(a_i)}{\pi_i^{\mu}(a_i)}}-\sqrt{\frac{\pi_i^{\mu}(a_i)}{\pi_i^t(a_i)}}\right)^2,
\end{align*}
where the third equality follows from $\sum_{i=1}^2v_i^{\pi^{\mu}}=0$ by the definition of zero-sum games.

Next, we prove the second part of the theorem.
From the first part of the theorem, we have:
\begin{align}
\label{eq:J_dot_KL}
    \frac{d}{dt}D_{\psi}(\pi^{\mu}, \pi^t) &= - \mu\sum_{i=1}^2\sum_{a_i\in A_i}c_i(a_i)\left(\frac{\pi_i^t(a_i)}{\pi_i^{\mu}(a_i)}+\frac{\pi_i^{\mu}(a_i)}{\pi_i^t(a_i)} - 2\right) \nonumber\\
    &\leq - \mu\sum_{i=1}^2\left(\min_{a_i\in A_i}\frac{c_i(a_i)}{\pi_i^{\mu}(a_i)}\right)\sum_{a_i\in A_i}\pi_i^{\mu}(a_i)\left(\frac{\pi_i^t(a_i)}{\pi_i^{\mu}(a_i)}+\frac{\pi_i^{\mu}(a_i)}{\pi_i^t(a_i)} - 2\right) \nonumber\\
    &= - \mu\sum_{i=1}^2\left(\min_{a_i\in A_i}\frac{c_i(a_i)}{\pi_i^{\mu}(a_i)}\right)\sum_{a_i\in A_i}\frac{(\pi_i^t(a_i)-\pi_i^{\mu}(a_i))^2}{\pi_i^t(a_i)} \nonumber\\
    &\leq - \mu\sum_{i=1}^2\left(\min_{a_i\in A_i}\frac{c_i(a_i)}{\pi_i^{\mu}(a_i)}\right)\ln \left(1 + \sum_{a_i\in A_i}\frac{(\pi_i^t(a_i)-\pi_i^{\mu}(a_i))^2}{\pi_i^t(a_i)}\right) \nonumber\\
    &= - \mu\sum_{i=1}^2\left(\min_{a_i\in A_i}\frac{c_i(a_i)}{\pi_i^{\mu}(a_i)}\right)\ln \left(\sum_{a_i\in A_i}\pi_i^{\mu}(a_i)\frac{\pi_i^{\mu}(a_i)}{\pi_i^t(a_i)}\right) \nonumber\\
    &\leq - \mu\sum_{i=1}^2\left(\min_{a_i\in A_i}\frac{c_i(a_i)}{\pi_i^{\mu}(a_i)}\right)\sum_{a_i\in A_i}\pi_i^{\mu}(a_i)\ln \left(\frac{\pi_i^{\mu}(a_i)}{\pi_i^t(a_i)}\right) \nonumber\\
    &= - \mu\sum_{i=1}^2\left(\min_{a_i\in A_i}\frac{c_i(a_i)}{\pi_i^{\mu}(a_i)}\right)\mathrm{KL}(\pi_i^{\mu}, \pi_i^t) \leq - \mu\left(\min_{i\in \{1,2\}, a_i\in A_i}\frac{c_i(a_i)}{\pi_i^{\mu}(a_i)}\right)\sum_{i=1}^2\mathrm{KL}(\pi_i^{\mu}, \pi_i^t),
\end{align}
where the second inequality follows from $x \geq \ln(1+x)$ for all $x>0$, and the third inequality follows from the concavity of the $\ln(\cdot)$ function and Jensen's inequality for concave functions.
On the other hand, when $\psi_i(p)=\sum_{a_i\in A_i}p(a_i)\ln p(a_i)$, $D_{\psi_i}(\pi_i^{\mu}, \pi_i^t)=\mathrm{KL}(\pi_i^{\mu}, \pi_i^t)$.
Thus, we have $D_{\psi}(\pi^{\mu}, \pi^t)=\sum_{i=1}^2 \mathrm{KL}(\pi_i^{\mu}, \pi_i^t)$.
From this fact and (\ref{eq:J_dot_KL}), we have:
\begin{align*}
    \frac{d}{dt}\mathrm{KL}(\pi^{\mu}, \pi^t) \leq - \mu\left(\min_{i\in \{1,2\}, a_i\in A_i}\frac{c_i(a_i)}{\pi_i^{\mu}(a_i)}\right)\mathrm{KL}(\pi^{\mu}, \pi^t).
\end{align*}
\end{proof}

\section{Proofs of Additional Lemmas}
\subsection{Proof of Lemma \ref{lem:convex_conjugate}}
\begin{proof}[Proof of Lemma \ref{lem:convex_conjugate}]
First, for any $\pi\in \prod_{i=1}^2\Delta(A_i)$,
\begin{align}
    \label{eq:bregman_div}
    D_{\psi}(\pi, \pi^t) = \sum_{i=1}^2D_{\psi_i}(\pi_i, \pi_i^t) = \sum_{i=1}^2\left(\psi_i(\pi_i) - \psi_i(\pi_i^t) - \left\langle \nabla\psi_i(\pi_i^t), \pi_i - \pi_i^t\right\rangle\right).
\end{align}
From the assumptions on $\psi_i$ and the first-order necessary conditions for the optimization problem of $\argmax_{p\in \Delta(A_i)}\left\{\left\langle z_i^t, p\right\rangle - \psi_i(p)\right\}$, for $\pi_i^t=\argmax_{p\in \Delta(A_i)}\left\{\left\langle z_i^t, p\right\rangle - \psi_i(p)\right\}$, there exists $\lambda\in \mathbb{R}$ such that
\begin{align*}
    z_i^t - \nabla \psi_i(\pi_i^t) = \lambda \mathbf{1}.
\end{align*}
Therefore, we have:
\begin{align}
    \label{eq:inner_product}
    \left\langle z_i^t, \pi_i - \pi_i^t \right\rangle = \left\langle \lambda \mathbf{1} + \nabla \psi_i(\pi_i^t), \pi_i - \pi_i^t \right\rangle = \left\langle \nabla \psi_i(\pi_i^t), \pi_i - \pi_i^t \right\rangle.
\end{align}
By combining (\ref{eq:bregman_div}) and (\ref{eq:inner_product}):
\begin{align*}
    D_{\psi}(\pi, \pi^t) &= \sum_{i=1}^2\left(\psi_i(\pi_i) - \psi_i(\pi_i^t) - \left\langle z_i^t, \pi_i - \pi_i^t \right\rangle\right) \\
    &= \sum_{i=1}^2\left(\left\langle z_i^t, \pi_i^t\right\rangle - \psi_i(\pi_i^t) -\langle z_i^t, \pi_i\rangle + \psi_i(\pi_i)\right) \\
    &= \sum_{i=1}^2\left(\max_{p\in \Delta(A_i)}\left\{\left\langle z_i^t, p\right\rangle - \psi_i(p)\right\}-\langle z_i^t, \pi_i\rangle + \psi_i(\pi_i)\right).
\end{align*}
\end{proof}

\subsection{Proof of Lemma \ref{lem:stationary_point_rmd}}
\begin{proof}[Proof of Lemma \ref{lem:stationary_point_rmd}]
We assume that there exists $i\in \{1,2\}$ and $a_i\in A_i$ such that $\pi^{\mu}_i(a_i)=0$.
Then, for such $i$ and $a_i$, we have:
\begin{align*}
    \frac{d}{dt}\pi_i^{\mu}(a_i)=\pi_i^{\mu}(a_i)\left(q_i^{\pi^{\mu}}(a_i) - v_i^{\pi^{\mu}}\right) + \mu\left(c_i(a_i)-\pi_i^{\mu}(a_i)\right) = \mu c_i(a_i) > 0.
\end{align*}
This contradicts that $\frac{d}{dt}\pi_i^{\mu}(a_i)=0$ since $\pi^{\mu}$ is a stationary point.
Therefore, for all $i\in \{1, 2\}$ and $a_i\in A_i$, we have $\pi_i^{\mu}(a_i) > 0$.
\end{proof}

\end{document}